\pdfoutput= 1 
\documentclass{article}
\usepackage{amsmath,amssymb,amsfonts,amsthm,mathtools,xcolor,multicol,microtype}
\usepackage[hidelinks,pdfusetitle]{hyperref}
\usepackage[numbered]{bookmark}
\usepackage[only,llbracket,rrbracket]{stmaryrd}
\newcommand{\intset}[2]{\llbracket #1, #2\rrbracket}

\newtheorem{theorem}{Theorem}
\numberwithin{theorem}{section}
\newtheorem{proposition}[theorem]{Proposition}
\newtheorem{lemma}[theorem]{Lemma}
\newtheorem{definition}[theorem]{Definition}
\newtheorem{conjecture}[theorem]{Conjecture}
\theoremstyle{remark}
\newtheorem{remark}[theorem]{Remark}

\numberwithin{equation}{section}

\newcommand{\unquad}{\hspace{-1em}}

\DeclareMathOperator{\Card}{Card}
\newcommand{\Nsusy}{\mathcal{N}}
\newcommand{\CC}{\mathbb{C}}
\newcommand{\RR}{\mathbb{R}}
\newcommand{\QQ}{\mathbb{Q}}
\newcommand{\ZZ}{\mathbb{Z}}
\newcommand{\qinst}{\mathfrak{q}}
\newcommand{\eps}{\epsilon}
\newcommand{\veps}{\varepsilon}
\newcommand{\bel}{\be\label}
\newcommand{\be}{\begin{equation}}\newcommand{\ee}{\end{equation}}
\newcommand\beal{}
\def\beal#1\ee{\begin{equation}\begin{aligned}\relax #1\end{aligned}\end{equation}}
\newcommand\beall{}
\def\beall#1#2\ee{\begin{equation}\label{#1}\begin{aligned}\relax #2\end{aligned}\end{equation}}
\newcommand\bega{}
\def\bega#1\ee{\begin{equation}\begin{gathered}\relax#1\end{gathered}\end{equation}}
\newcommand{\Zinst}{Z_{\textnormal{inst}}}
\newcommand{\Zinststar}{Z_{\textnormal{inst}}^{\Nsusy=2^*}}
\newcommand{\onebf}{\mathbf{1}}
\DeclareMathOperator{\dist}{dist}
\DeclareMathOperator{\sign}{sgn}
\newcommand{\ceil}[1]{\lceil#1\rceil}
\newcommand{\bigceil}[1]{\bigl\lceil#1\bigr\rceil}
\newcommand{\Bigceil}[1]{\Bigl\lceil#1\Bigr\rceil}

\newcommand{\floor}[1]{\lfloor#1\rfloor}
\newcommand{\bigfloor}[1]{\bigl\lfloor#1\bigr\rfloor}
\newcommand{\Bigfloor}[1]{\Bigl\lfloor#1\Bigr\rfloor}

\newcommand{\abs}[1]{\lvert #1\rvert}
\newcommand{\Rabs}{R_{\textnormal{abs}}}

\begin{document}

\title{Convergence of Nekrasov instanton sum with adjoint matter}
\author{Bruno Le Floch\thanks{Email: {\tt blefloch@lpthe.jussieu.fr}. Centre National de la Recherche Scientifique \& Laboratoire de Physique Théorique et Hautes Énergies, Sorbonne Université.}}
\date{May 2026}

\maketitle

\begin{abstract}
  The Nekrasov instanton partition function of the 4d $\mathcal{N}=2^*$ $U(N)$ gauge theory (a mass deformation of 4d $\mathcal{N}=4$ super-Yang-Mills theory), which is a generating series of equivariant integrals over instanton moduli spaces, is given by a sum over colored partitions weighted by a counting parameter~$\qinst$.
  This note proves convergence of the series in the unit disk $|\qinst|<1$ for generic parameters.
  Specifically, the absolute convergence radius of this sum is determined, assuming that mass and Coulomb branch parameters avoid some lattice.
  If the ratio $b^2=\epsilon_1/\epsilon_2$ of equivariant parameters is in ${\CC\setminus[0,+\infty)}$, the radius is~$1$, as expected.
  If $b^2$ is non-negative, three cases arise:
  the radius is finite if $b^2$ has finite exponential type (a generalization of Brjuno numbers), namely there exists $C>0$ such that $|b^2-p/q|>\exp(-Cq)$ for all integers $p,q\neq 0$;
  the series diverges if $b^2$ is super-exponentially well approximable by rationals;
  and if $b^2$ is rational some terms are singular.
  The AGT correspondence translates these results to convergence of torus one-point conformal blocks of the Virasoro and $W_N$ algebras with non-real~$b$, within the unit disk.  For the Virasoro algebra this corresponds to a central charge in $\CC\setminus[25,+\infty)$.
\end{abstract}

\vfill

\begingroup
\setlength{\columnsep}{30pt}
\begin{multicols}{2}
\scriptsize
\tableofcontents
\end{multicols}
\endgroup

\section{Introduction and main result}

\subsection{Instanton partition functions}

Nekrasov defined and evaluated~\cite{hep-th/0206161} the partition function of large classes of 4d $\Nsusy=2$ Lagrangian theories on the Omega background~$\RR^4_{\eps_1,\eps_2}$, as a means to derive the Seiberg--Witten prepotential in the $\eps_1,\eps_2\to 0$ limit.
He expressed their non-perturbative part\footnote{We shall focus on the instanton part of the Nekrasov partition function, and ignore the classical and one-loop prefactors which have no effect on convergence of the series.} as a series in powers of an instanton counting parameter~$\qinst$,\footnote{When there are multiple gauge groups each one has a counting parameter~$\qinst$ and a sum over~$k$.  In (mass-deformed) SCFT, they take the form $\qinst=e^{2\pi i\tau}$ where $\tau=\frac{4\pi i}{g^2}+\frac{\theta}{2\pi}$ is a complexified gauge coupling.}
\bel{Zinst-orig}
\Zinst(m,a; \qinst) = \sum_{k\geq 0} \qinst^k Z_k(m,a) ,
\ee
which depends on masses~$m$ and Coulomb branch parameters~$a$ in addition to $\eps_1,\eps_2,\qinst$.
See~\cite{1412.7121} for a review.
These partition functions play a central role in mathematical physics and string theory due to their relation with refined topological strings~\cite{hep-th/0701156}, 2d CFT conformal blocks~\cite{0906.3219}, quantum integrable systems~\cite{0908.4052}, isomonodromic deformations~\cite{1207.0787} and black hole quasinormal modes~\cite{1404.5188,2006.06111}.

Commonly, series expansions of quantum field theory observables are merely asymptotic series, with vanishing convergence radius.  The convergence of the series~\eqref{Zinst-orig} is thus far from obvious.
When the 4d $\Nsusy=2$ theory is of class~S, the Nekrasov partition function is related by the AGT correspondence~\cite{0906.3219} to Virasoro and $W_N$~conformal blocks with coupling constant
\be
b=\sqrt{\eps_1/\eps_2} .
\ee
The expectation from this 2d chiral CFT description is that the series converges, with a radius of convergence that only depends on the gauge theory, and not on $\eps_1,\eps_2,m,a$.
This radius is expected to be infinite for asymptotically free theories, and equal to~$1$ in (mass-deformed) SCFTs.

We aim to establish this optimal radius of convergence for one of the prototypical 4d $\Nsusy=2$ theories, the so-called 4d $\Nsusy=2^*$ $U(N)$ theory, which consists of a $U(N)$ vector multiplet coupled to an adjoint hypermultiplet of mass~$m$.
Equivariant localization on the instanton moduli spaces (see e.g.,~\cite{math/0609841} for a mathematical description) provides an explicit formula as a contour integral, which can be evaluated as a sum of residues.
This results in a sum over $N$-tuples of Young diagrams $\vec{Y} = (Y_1,\dots,Y_N)$
\begin{subequations}\begin{align}
\label{Zinst-Young-k}
\Zinst(m,a; \qinst) & = \sum_{k\geq 0} \qinst^k \sum_{|\vec{Y}|=k} Z_{\vec{Y}}(m,a) \\
\label{Zinst-Young-all}
               & = \sum_{\vec{Y}} \qinst^{|\vec{Y}|} Z_{\vec{Y}}(m,a),
\end{align}\end{subequations}%
where $|\vec{Y}|=|Y_1|+\dots+|Y_N|=k$ is the total number of boxes.
The coefficients $Z_{\vec{Y}}(m,a)$ depend on the mass $m\in\CC$ and Coulomb branch parameters $a=(a_1,\dots,a_N)\in\CC^N$ of the $U(N)$ gauge theory, which are equivariant parameters in the moduli space description.
They are products (over boxes of the Young diagrams $Y_I$, $I=1,\dots,N$) of simple rational functions of $m$ and~$a$.
See~\eqref{Zstar-21} for the explicit formula, which we take as a definition of the instanton partition function.

\subsection{Absolute convergence radius}

We are interested in the validity of the expression~\eqref{Zinst-Young-all} as a sum over all tuples of partitions, and especially its \emph{absolute convergence radius}, namely:
\bel{absolute-sum}
\text{What is the convergence radius $\Rabs$ of } \sum_{\vec{Y}} |\qinst|^{|\vec{Y}|} |Z_{\vec{Y}}(m,a)|<+\infty \text{ ?}
\ee
We dub this sum the absolute Nekrasov sum.
Its convergence for $|\qinst|<\Rabs$ implies the convergence of~$\Zinst$ in that disk, but the converse does not hold.

The main outcome of this work is to obtain the \emph{optimal} absolute convergence radius $\Rabs=1$ for generic parameters $\eps_1,\eps_2,a,m$, and lower/upper bounds on~$\Rabs$ when the ratio $b^2=\eps_1/\eps_2$ is non-generic.
We assume throughout that
\be
\eps_1,\eps_2\neq 0 , \qquad
b^2 = \eps_1 / \eps_2 \in \CC\setminus\{0\} ,
\ee
namely we exclude the Nekrasov--Shatashvili limit~\cite{0908.4052} $\eps_2\to 0$ or $\eps_1\to 0$.
The mass~$m$ of the adjoint hypermultiplet, and the Coulomb branch parameters $(a_1,\dots,a_N)\in\CC^N$ of the $U(N)$ gauge theory are assumed to be suitably generic.

\begin{theorem}[On the 4d $\Nsusy=2^*$ instanton partition function]\label{thm:2star}
  Assume that none of the differences $a_I-a_J$ for $1\leq I<J\leq N$, nor the adjoint mass~$m$, are in the closure\footnote{The lattice $\eps_1\ZZ+\eps_2\ZZ$ is discrete if $b^2\in\CC\setminus\RR$ or $b^2\in\QQ$, but is dense in the line $\eps_1\RR=\eps_2\RR$ if $b^2\in\RR\setminus\QQ$.} of the lattice $\eps_1\ZZ+\eps_2\ZZ$.
  For rational $b^2<0$, assume further\footnote{This condition is only used to prove $\Rabs\leq 1$ in \autoref{sec:Rabs_le_1} and is likely unnecessary.} that none of the $a_I-a_J+m$ are in that lattice.
  Then the following holds.
  \begin{itemize}
  \item For $b^2\in\CC\setminus[0,+\infty)$, the absolute convergence radius is $\Rabs=1$.
  \item For irrational $b^2>0$, let its exponential type $B_{\sup}(b^2)\in[0,+\infty]$ be the infimum of $B>0$ such that $\dist(nb^2,\ZZ)\gtrsim e^{-Bn}$ as $n\to+\infty$.  Then
    \bel{Rabs-lower-bound-thm2star}
    A_1 e^{-8B_{\sup}(b^2)/\max(1,b^2)} \leq \Rabs \leq \min(1, A_2 e^{-B_{\sup}(b^2)/(1+b^2)})
    \ee
    where $A_1,A_2>0$ are continuous in $m,a,\eps_1,\eps_2$.
    In particular, $\Rabs>0$ if the exponential type is finite, and $\Rabs=0$ if the exponential type is infinite, in which case the absolute Nekrasov sum diverges for all $\qinst\neq 0$.
  \item For rational $b^2>0$ the sum is meaningless as some terms are singular.
  \end{itemize}
\end{theorem}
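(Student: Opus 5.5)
The plan is to start from the explicit box-product formula~\eqref{Zstar-21} for $Z_{\vec{Y}}(m,a)$. It is a product over pairs $(I,J)$ and boxes $s\in Y_I$ of ratios
\[
\frac{\bigl(E_{IJ}(s)-m\bigr)\bigl(\eps_1+\eps_2-E_{IJ}(s)-m\bigr)}{E_{IJ}(s)\bigl(\eps_1+\eps_2-E_{IJ}(s)\bigr)},
\qquad
E_{IJ}(s)=a_I-a_J-\eps_1 L_{Y_J}(s)+\eps_2\bigl(A_{Y_I}(s)+1\bigr).
\]
Each factor has the form $(x-m)(x+m)/x^2$ up to shifts, where $x$ ranges over arm/leg combinations $a_{IJ}+\eps_1 i+\eps_2 j$. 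Writing $x-m$ over $x$ as $1-m/x$, we get $\log|Z_{\vec{Y}}| = \sum \log|1-m^2/x^2|$ plus shifts. The core estimate I aim for is
\[
|Z_{\vec{Y}}|\le C^{\,|\vec Y|}\, e^{o(|\vec Y|)}
\]
uniformly in $\vec Y$, with $C$ computable. Combined with the count of $N$-tuples of partitions of $k$, which is $e^{O(\sqrt k)}$, this bounds the absolute sum by $\sum_k (C|\qinst|)^k e^{o(k)}$.

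\textbf{Case $b^2\notin[0,\infty)$.} Here $|x|\gtrsim |i|+|j|$ whenever $x$ stays away from $0$; away from the lattice this follows from the hypothesis on $a_I-a_J$, using discreteness of the lattice for nonreal or negative rational $b^2$, and a cone argument for irrational negative $b^2$. Then $\sum_{s}|m|^2/|x_s|^2$ over the hook pairs is controlled. Since the hook data of a partition of size $k$ satisfy $\sum_s 1/h(s)^2 = O(\log k)$, or at least $o(k)$ (each hook length $h$ appears at most $O(\sqrt{k})$ times), we get $\log|Z_{\vec Y}| = o(k)$, which gives $\Rabs\ge1$. For $\Rabs\le1$ I would exhibit a family, e.g.\ $Y_1$ a single row and the other $Y_I$ empty, on which $|Z_{\vec Y}|$ decays at most subexponentially. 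This reduces to a product of Gamma-function ratios whose Stirling asymptotics are polynomial in $k$. The extra hypothesis on $a_I-a_J+m$ for rational $b^2<0$ guarantees these factors are nonvanishing.

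\textbf{Case irrational $b^2>0$.} Now the denominators $x=\eps_2(j+b^2 i)+a_{IJ}$ can be small, of size $\dist(nb^2,\ZZ)$ for $n\le$ the dimensions of the diagram. The product of $1/|x|$ over boxes with small $x$ is what I need to control. Small divisors in arms/legs up to length $n$ occur with multiplicity at most about the number of boxes. Using the Brjuno-type bound $\dist(nb^2,\ZZ)\ge e^{-Bn}$ and the fact that in a partition of size $k$ the sum of $\min(\mathrm{arm},\mathrm{leg})$-type quantities is $O(k)$, the loss is at most $e^{8Bk/\max(1,b^2)}$. The constant $8$ comes from counting the four factors and the pairing of $a$ and $L$. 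This gives the lower bound on $\Rabs$.

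\textbf{Upper bound and rational case.} For the upper bound, pick $q$ with $\dist(qb^2,\ZZ)\approx e^{-Bq}$, i.e.\ $|q b^2-p|$ tiny, and take a rectangular diagram of size about $p\times q$. Its hook at the corner yields a denominator $\eps_1 q-\eps_2 p$ that is tiny, while the numerator stays bounded since $m$ is off the lattice closure. The resulting term is at least $e^{Bq}$ times $C^{-(pq)}$-type factors. I would need to balance this against the area $pq\sim q^2 b^2$. A better choice is a thin hook-shaped diagram of size $\sim(1+b^2)q$ containing the resonant box, which gives $|Z|\gtrsim e^{Bq}A^{-(1+b^2)q}$ and hence $\Rabs\le A_2 e^{-B/(1+b^2)}$. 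For rational $b^2=p/q>0$, the same diagram gives $x=0$ exactly, a pole, so some terms are singular.

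\textbf{Main obstacle.} I expect the hardest step to be the combinatorial small-divisor bound in the irrational case: controlling how often, across all box pairs, the quantities $\eps_1 L-\eps_2(A+1)$ come close to the lattice, sharply enough to get a linear exponent in $k$. I would first try grouping boxes by their (arm, leg) pair and using that the number of boxes with leg $=\ell$ and arm $=\alpha$ in a partition is at most $k/\max(\ell,\alpha)$.
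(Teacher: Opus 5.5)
Your treatment of non-real $b^2$ (denominators growing like a relative hook length, at most $\sqrt{2k}$ boxes per level set) and your thin-hook upper bound follow the paper. The lower bound on $\Rabs$ for irrational $b^2>0$, however, has a genuine gap.

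Bounding each resonant box separately by $-\log\dist(nb^2,\ZZ)\le Bn+C_B$ and adding up requires the legs of resonant boxes to sum to $O(k)$, and this fails. Your claim that $\sum_s\min(\text{arm},\text{leg})=O(k)$ is already false for an $n\times n$ square, where it is $\sim k^{3/2}/3$. It also fails for resonant boxes. Take the near-triangular partition $\lambda_i=1+\floor{b^2(L-i)}$ with $b^2<1$. The paper shows that every box has content $c\in(-1,b^2)$, so one of its two denominators, $\eps_2(1+c)$ or $\eps_2(b^2-c)$, has modulus $<|\eps_2|$. Meanwhile the legs sum to $\sum_i(i-1)\lambda_i\asymp k^{3/2}$. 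Your estimate therefore only gives $\log|Z|\lesssim Bk^{3/2}$. Even for badly approximable $b^2$, where the worst case is $-\log\dist(qb^2,\ZZ)\approx\log q$, it gives $k\log k$, so you get no positive radius at all. The count $k/\max(\ell,\alpha)$ in your last paragraph does not repair this.

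The missing idea is to average over $q$ instead of using the worst case. All resonant boxes with leg $q$ share the same small divisor $\{\pm qb^2\}$. Their number is at most $\ell_{q+1}=\min(\lambda_{q+1},\sqrt{2|\lambda|})$, which is non-increasing in $q$ with $\sum_q\ell_q\le|\lambda|$. Summation by parts then reduces everything to the Ces\`aro means $\frac1Q\sum_{q\le Q}(-\log\{qb^2\}-\log\{-qb^2\})$. These are at most $4+4B_{\sup}(b^2)+o(1)$, which follows from two facts: $\{jb^2\}$ is equidistributed over blocks of $q_n$ consecutive $j$ (with $q_n$ the convergent denominators), and $\dist(jb^2,\ZZ)>\dist(q_nb^2,\ZZ)$ for $q_n<j<q_{n+1}$. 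The constant $8$ is $2\times 4$, and $\max(1,b^2)$ comes from the symmetry $b\to1/b$ together with $B_{\sup}(1/x)=B_{\sup}(x)/x$; neither comes from counting four factors.

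Several smaller steps also fail as stated.
\begin{itemize}
\item For real $b^2<0$, the bound $|x|\gtrsim|i|+|j|$ is false for off-diagonal factors, because $\mu'_j-i$ has no sign. For $b^2=-1$ the denominator is $a_I-a_J+\eps_2(1+\mu'_j-i+\lambda_i-j)$. You need to count boxes on level sets of the strictly decreasing integer $\floor{((-b^2)(\mu'_j-i)+\lambda_i-j)/\min(1,-b^2)}$; your $\sqrt{2k}$ staircase count then applies.
\item For $\Rabs\le 1$, a single row in $Y_1$ can give $Z_{\vec{Y}}=0$ even for non-real $b^2$. Take $N=2$, $m=-\eps_2/2$, $a_1-a_2=\eps_2/2$, which the hypotheses allow. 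Then the numerators $a_I-a_J+(j-1)\eps_2+m$ vanish for a single row at either index. The index, and rows versus columns, must be chosen as in the paper's graph argument.
\item In the $b^2>0$ upper bound, $a_1-a_J=m-\eps_1-\eps_2$ makes every term with only $Y_1$ non-empty vanish, through the box $(1,1)$. The non-empty index must be chosen (the paper orders the $a_I$ by $\alpha_I$) to bound the off-diagonal factors from below.
\item For rational $b^2>0$, a single vanishing denominator can be cancelled by an off-diagonal numerator zero when $a$ is fine-tuned. This is why the paper uses partitions with arbitrarily many vanishing denominators.
\end{itemize}
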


\paragraph{On absolute and conditional convergence.}

From its relation to conformal blocks, the instanton partition function is expected to converge in the unit disk $|\qinst|<1$.
For cases where we show $\Rabs<1$, this requires that the $k$-instanton contribution (sum over tuples $\vec{Y}$ with $k$~boxes) features cancellations such that the resulting series in~$\qinst$ has radius of convergence~$1$.\footnote{Note that this phenomenon whereby the absolute convergence radius~$\Rabs$ is less than the conditional convergence radius does \emph{not} occur in a simple series of the form $\sum_{k\geq 0} a_k \qinst^k$, as is well-known from the theory of holomorphic functions.}

A particularly visible instance of this is that if $b^2$ is a positive rational number, or if any of the differences $a_i-a_j$ for $1\leq i,j\leq N$ belongs to the lattice $\eps_1\ZZ+\eps_2\ZZ$, then some terms $Z_{\vec{Y}}(m,a)$ in the sum are infinite hence the expression~\eqref{Zinst-Young-all} is meaningless.  These singularities are removable singularities in the $k$-instanton contribution as can be seen from its contour integral representation.\footnote{Rational approximations of $b^2>0$ are the main obstruction to proving $\Rabs=1$ for all~$b^2$.  It seems worthwhile to analyze how to group terms to cancel poles in the rational $b^2>0$ case.}

The advantage of having an absolutely convergent sum is that it allows any reorganization of terms in~\eqref{Zinst-Young-all}.
This reordering freedom was used implicitly in works that consider limits where instanton partition functions simplify, such as Higgsing limits where it reduces to instanton--vortex partition functions, as they often involving resumming subsets of terms that are not organized simply by their number of boxes.

\paragraph{Convergence techniques.}

The present work relies on bounding each term in the sum over partitions.
This follows and refines the approach in~\cite{1403.1235}\footnote{I thank Fabrizio Del Monte for useful discussions.} which showed convergence for $b^2=-1$ for pure $\Nsusy=2$ super-Yang--Mills theory (see also~\cite{1608.02566} in 5d), and \cite{2212.06741} where it was shown that $\Zinst$ has a positive radius of convergence for $\Nsusy=2^*$ (excluding the case $b^2\geq 0$) and for $N_f=2N$ SQCD (for $b^2=-1$), and infinite radius of convergence in related asymptotically-free theories.
The relevant combinatorial tools to improve the radius and extend the range of~$b^2$ are rather different for the two theories; the refined analysis of SQCD is thus postponed.
For the $\Nsusy=2^*$ theory in the regime $b^2\in\CC\setminus[0,+\infty)$, \cite{2212.06741}~gave
\bel{Arnaudo-Rabs}
\Rabs \geq \Bigl(1 + \frac{|m|}{\min_{1\leq I\neq J\leq N} \dist(a_I-a_J,\eps_1\ZZ+\eps_2\ZZ)}\Bigr)^{-2(N-1)} ,
\ee
by bounding most factors in~$Z_{\vec{Y}}$ uniformly.
Theorem~\ref{thm:2star} improves this to the \emph{optimal} convergence radius
\be
\Rabs=1
\ee
by observing that only a small number of factors can be as large as the bound used to derive~\eqref{Arnaudo-Rabs}, and the remaining factors tend to~$1$ at a controlled rate.
Roughly speaking, ``box contents'' tend to infinity at a rate~$|\vec{Y}|^{1/2}$, the corresponding factors are $1+O(|\vec{Y}|^{-1/2})=\exp(O(|\vec{Y}|^{-1/2}))$, so that their product is $\exp(O(|\vec{Y}|^{1/2}))$, which is not large enough to affect the convergence radius.
The technique appears to extend straightforwardly to the 5d setting.

The convergence of instanton partition functions (equivalently conformal blocks) and their 5d $\Nsusy=1$ analogues, has been explored in several other ways.
\begin{itemize}
\item Bounding the contour integral representation of 5d $k$-instanton partition functions yields a positive radius of convergence~\cite{1709.05232} analogous to~\eqref{Arnaudo-Rabs} for a range of parameters.  The technique encounters obstructions for 4d partition functions due to infinite integration contours.  To translate the improvements we make here in terms of the contour integral, one would have to track how integration variables ``spread out'' away from each other, in analogy to how the box content grows to infinity.

\item A large body of mathematical work on vertex operator algebras~(VOA) established convergence of their conformal blocks with arbitrary genus, assuming that the VOA is $C_2$-cofinite and the VOA modules involved are $C_1$-cofinite.  See \cite{2204.04409} and references therein.  In the present setting, this corresponds to conformal blocks of minimal models and $W_N$~minimal models, namely rational $b^2<0$ and a finite set of values for $m,a$.  Convergence is proven by relying on BPZ equations arising from the fact that the theory is a rational CFT\@.

\item The probabilistic construction of Liouville CFT from Gaussian multiplicative chaos allowed to prove convergence of conformal blocks for $b^2\in(0,1)$ and $m/\eps_2\in(-1,b^2+2)$ and purely imaginary $a/\eps_2$ in~\cite{2003.03802}.
\end{itemize}

\paragraph{Other contexts where Diophantine properties of $b^2$ matter.}

The limit of minimal model correlators (negative rational~$b^2$) as $b^2$ tends to a negative irrational value was considered in \cite{1406.4290,1711.08916,1909.10784}.  When considering the non-diagonal D-series minimal models, this led to a non-diagonal irrational CFT sharing some similarities with Liouville CFT, provided the irrational value of~$b^2$ is not too well approximated by rationals.  The precise convergence criterion has not been fully elucidated and it would be interesting to compare with the present work.

Let us also mention a recent series representation of Virasoro fusion (and modular) kernels~\cite{2405.09325,2512.03172}\footnote{Another recent work~\cite{2508.14030} provides a closed form expression of these kernels for $c=1$, namely $b^2=-1$.} in powers of $\exp(-2\pi b^{\pm 1}P_t)$ with $P_t$ one of the internal momenta.  Its convergence properties for $b^2>0$ were conjectured to depend on arithmetic properties of~$b^2$, in analogy with the convergence of basic hypergeometric series~\cite{1504.01238}.  It would be interesting to determine whether the criteria we develop for the absolute Nekrasov sum are relevant for these series.

\subsection{Conformal blocks and more}
\label{sec:conf-blocks}

The AGT dictionary (see~\cite{2006.14025} and references therein) relates the instanton partition function of interest here to torus one-point conformal blocks for a product of the Heisenberg algebra and the $W_N$~algebra (corresponding to the $\mathfrak{u}(1)\times\mathfrak{su}(N)$ decomposition of $\mathfrak{u}(N)$).
The Heisenberg algebra contribution can be factored out, leaving a conformal block for the $W_N$ algebra with central charge
\be
c_{\mathfrak{su}(N)} = N-1 + (N^3-N) (b+1/b)^2 ,
\ee
and where the external vertex operator inserted on the torus is semi-degenerate, hence depends only on a single parameter $m\in\CC$, while the internal operator flowing through the torus (VOA module on which the trace is taken) depends on the $N$ Coulomb branch parameters modulo overall translation.
In the case $N=2$, the semi-degeneracy condition trivializes and one obtains arbitrary one-point torus conformal blocks of the Virasoro algebra.

As a result, Theorem~\ref{thm:2star} establishes the expected convergence region $|\qinst|<1$ of these conformal blocks in the regime
\be
c_{\mathfrak{su}(N)} \in \CC \setminus \bigl[ (N-1)(1+2N)^2 , +\infty\bigr) , \qquad
\text{generic } m, a .
\ee
The genericity condition on~$m$ states that it is not a degenerate vertex operator.
The genericity condition on~$a$ is less readily interpreted.

In the Virasoro case this means convergence is established for
\be
c_{\textnormal{Vir}}\in \CC\setminus[25,+\infty) , \qquad
\text{generic } m, a .
\ee
This includes in particular the case of minimal model central charges corresponding to $b^2$ a negative rational number (with $c<1$), but the genericity condition on~$m$ \emph{excludes} one-point blocks for the operators that are in the spectrum of minimal models.
Relaxing the genericity conditions on $m,a$ to reproduce convergence of minimal model conformal blocks seems quite difficult: indeed, as explored in \cite{1404.7075,1404.7094,1504.01925,1507.03540,1509.01000} one may need to restrict the instanton sum to only certain tuples of partitions.

Nekrasov partition sums also appear in the study of isomonodromic deformations.  Specifically, isomonodromic tau functions take the form of Fredholm determinants of suitable operators, which can be expanded as series over tuples of Young diagrams that match Nekrasov functions.  This was understood in \cite{1608.00958,1705.01869,1712.08546} for 4d $\Nsusy=2$ linear quiver theories including $SU(N)$ SQCD, and in~\cite{2011.06292} for circular quiver gauge theories, which includes the $\Nsusy=2^*$ theory.  Convergence results in the present paper should also apply to this setting.

\section{Strategy and basic upper bound}

\subsection{Formulas and notation}

The instanton partition function for the $\Nsusy=2^*$ theory takes the form (see \autoref{sec:instanton-telescopic})
\beall{Zstar-21}
\Zinststar & = \sum_{\vec{Y}} \qinst^{|\vec{Y}|} Z_{\vec{Y}} , \qquad Z_{\vec{Y}} = \prod_{I,J=1}^N Z_{IJ} , \\
Z_{IJ} & = Z_{Y_IY_J}(a_I-a_J+\eps_2,m) Z_{Y_IY_J}(a_I-a_J-\eps_1,-m) , \\
Z_{\lambda\mu}(a,m) & = \prod_{(i,j)\in\lambda} \Bigl(1-\frac{m}{a-\eps_1 (\mu'_j-i) + \eps_2 (\lambda_i-j)}\Bigr) .
\ee

As for standard analytic series, the absolute convergence radius~$\Rabs$ can be expressed as
\bel{Rdef-Z}
\Rabs = \liminf_{\vec{Y}} |Z_{\vec{Y}}|^{-1/|\vec{Y}|} \in [0,+\infty] .
\ee
The only subtlety in this formula, explained in \autoref{sec:absolute-convergence-radius}, is that the number of terms with a given instanton number $|\vec{Y}|=k$ grows subexponentially with~$k$.

We will also need the notation $\floor{x}$ (resp.\ $\ceil{x}$) for the largest (resp.\ smallest) integer $\leq x$ (resp.\ $\geq x$), and notation for the fractional part and the distance to the nearest integer,
\be
\{x\} \coloneqq x - \floor{x} \in [0,1) , \qquad
\dist(x,\ZZ) \coloneqq \min(\{x\}, \{-x\}) .
\ee
It is also convenient to denote subintervals of the integers~$\ZZ$ or rationals~$\QQ$ by a subscripted inequality, for instance
\be
\ZZ_{\geq k} = \ZZ\cap [k,+\infty) , \qquad
\QQ_{<\alpha} = \QQ\cap (-\infty,\alpha) .
\ee

\subsection{Convergence/divergence strategy}
\label{sec:strategy}

Upper bounds of the form $\Rabs\leq R_1$ are obtained by showing that the absolute Nekrasov sum diverges for larger radii, by exhibiting a specific sequence of terms such that $Z_{\vec{Y}}$ grows faster than $R^{-|\vec{Y}|}$ for all $R>R_1$.
In \autoref{sec:Rabs_le_1} the easy bound
\be
\Rabs\leq 1
\ee
is shown for all cases of interest.
Stronger bounds for specific non-generic~$b^2$ appear in \autoref{sec:divergence}.

Lower bounds on~$\Rabs$ require placing upper bounds on~$Z_{\vec{Y}}$, hence lower bounds on the denominators in~\eqref{Zstar-21}.
The diagonal factors $Z_{II}$ involve $Z_{\lambda\lambda}(a,m)$ for $a=\eps_2$ and $a=-\eps_1$.
Depending on $\vec{Y}$, their denominators can then take arbitrary values of the form $-k\eps_1+l\eps_2$ for $k,l\geq 0$, with the exception of $(k,l)=(0,0)$ due to the shifts by $a\in\{\eps_2,-\eps_1\}$.
For generic~$b^2$, this lattice of values has no accumulation point, whereas for $b^2\in(0,+\infty)$ these values either include zero (for $b^2$ rational) or are dense in the line $\eps_2\RR$.
This leads us to distinguish several cases, treated separately.

The degeneracy assumption on Coulomb branch parameters ensures that none of the denominators in $Z_{IJ}$ for $I\neq J$ vanish.  Moreover, they are all bounded below uniformly by the distance $\dist(a_I-a_J,\eps_1\ZZ+\eps_2\ZZ)$, which is non-zero since $a_I-a_J$ is assumed to not belong to the closure of the lattice $\eps_1\ZZ+\eps_2\ZZ$.
This bound was used in~\cite{2212.06741}, together with a summation formula for certain bounds on~$Z_{II}$, to get the radius~\eqref{Arnaudo-Rabs}.
Our key improvement is to track better how many denominators can be small in the off-diagonal factors~$Z_{IJ}$.
The method extends to diagonal factors~$Z_{II}$ and to the regime $b^2>0$ with some modifications, avoiding the need for the summation formula in~\cite{2212.06741}.
In \autoref{sec:optimal-convergence} we prove that $\Rabs=1$ for $b^2\in\CC\setminus[0,+\infty)$.
\begin{itemize}
\item For $b^2\not\in\RR$, we use that $\bigl|-\eps_1 (\mu'_j-i) + \eps_2 (\lambda_i-j)\bigr|$ is larger than a multiple of the integer $|\mu'_j-i+\lambda_i-j|$ and that there are at most $O(|\lambda|^{1/2})$ boxes with any given value of this integer, which eventually allows us to bound $\log Z_{\lambda\mu}(a,m)$ above by $O\bigl(|\lambda|^{1/2}\log(|\lambda|+|\mu|)\bigr)$, sufficient to deduce $\Rabs\geq 1$.

\item The case $b^2<0$ is similar with only technical differences.  The comparison with $|\mu'_j-i+\lambda_i-j|$ no longer holds (because $\mu'_j-i$ has no sign), so we use the real $-\eps_1 (\mu'_j-i) + \eps_2 (\lambda_i-j)$ directly and show that there are at most $O(|\lambda|^{1/2})$ boxes for which it is within each interval of width $\min(1,-b^2)$.  This leads to the same bounds, hence $\Rabs\geq 1$.
\end{itemize}

Before the $b^2>0$ case can be discussed, \autoref{sec:sup-Brjuno} is devoted to notions of continued fractions, rational approximations of irrational numbers, and especially the exponential type $B_{\sup}(x)$ of an irrational~$x$, which is the infimum of numbers $B>0$ such that $e^{Bn}\dist(nb^2,\ZZ)$ has a positive infimum over $n\geq 1$.

For irrational $b^2>0$ of finite exponential type (the generic case), \autoref{sec:badly-approx} provides a positive lower bound on~$\Rabs$ in terms of~$B_{\sup}(b^2)$ by measuring how closely $-\eps_1 (\lambda'_j-i) + \eps_2 (\lambda_i-j)$ may approach zero as the partition size increases.

For irrational $b^2>0$ of finite or infinite exponential type, we exhibit in \autoref{sec:well-approx} specific terms $Z_{\vec{Y}}$ based on rational approximations of~$b^2$, which involve very small denominators hence which grow quickly with the instanton number~$|\vec{Y}|$.
This leads to an upper bound on the radius~$\Rabs$ in terms of the exponential type~$B_{\sup}(b^2)$.  In the case of infinite exponential type (super-exponentially well approximable irrationals), this upper bound vanishes, that is, $\Rabs=0$, which means that the absolute Nekrasov sum diverges for any $\qinst\neq 0$.

Finally, for rational $b^2>0$, in \autoref{sec:divergence-rational}, we find specific terms $Z_{\vec{Y}}$ that have poles at that value of~$b^2$, making the sum over partitions ill-defined.

\subsection{Divergence beyond the unit disk}
\label{sec:Rabs_le_1}

The reader is invited to jump to \autoref{sec:optimal-convergence} if they are interested in convergence results, namely lower bound on~$\Rabs$.
The techniques developped here are not reused until \autoref{sec:divergence}.
The present section establishes the expected bound $\Rabs\leq 1$, under the mild genericity assumptions used in Theorem~\ref{thm:2star}, excluding the case of positive rational $b^2>0$ (for which the sum is ill-defined).
For $1\leq I<J\leq N$ and $s=-1,0,1$ we thus assume
\beall{rad1-basic-genericity}
b^2 & \in \CC \setminus \QQ ,
& m & \not\in\overline{\eps_1\ZZ+\eps_2\ZZ} ,
& a_I - a_J & \not\in\overline{\eps_1\ZZ+\eps_2\ZZ} ,
\\
\text{or} \quad
b^2 & \in \QQ \cap (-\infty,0) ,
& m & \not\in\eps_1\ZZ+\eps_2\ZZ ,
& s m + a_I - a_J & \not\in\eps_1\ZZ+\eps_2\ZZ ,
\ee
where $\overline{\eps_1\ZZ+\eps_2\ZZ}$ is the closure of that lattice, equal to the lattice except for $b^2\in\RR\setminus\QQ$.

The strategy is to show that some terms~$Z_{\vec{Y}}$ do not decay exponentially with~$|\vec{Y}|$.
The precise choice of term is mostly irrelevant, except for one difficulty, which is that some numerators might vanish when $a_I-a_J+m$ belongs to $\eps_1\ZZ+\eps_2\ZZ$.
After giving a first series of terms that works generically (up to exchanging $\eps_1\leftrightarrow\eps_2$), we present a more elaborate choice of terms to cover the special case $m\in(\eps_1+\eps_2)\RR$ with $b^2\in\CC\setminus\RR$.

\paragraph{A choice that works half of the time.}

Let us solve first the problem under the extra genericity condition that, for some fixed index $I_0$ and for all~$J$,
\bel{rad1-extra-genericity}
a_J-a_{I_0}+m-\eps_1-\eps_2\not\in\eps_2\ZZ_{\geq 0}
\quad\text{and}\quad
a_J-a_{I_0}-m\not\in\eps_2\ZZ_{\geq 0} .
\ee
We return afterwards to the question of how this can be ensured under the condition~\eqref{rad1-basic-genericity} by an exchange of $\eps_1,\eps_2$, except for a special case for which a different configuration~$\vec{Y}$ is needed.

Consider a specific family of terms, $\vec{Y}=(\dots,\emptyset,\lambda,\emptyset,\dots)$ with $Y_{I_0}=\lambda=(k)$ being a one-term partition.  Then the term associated to $\vec{Y}$ in~\eqref{Zstar-21} is
\be
Z_{I_0I_0} = \prod_{j=1}^k \Bigl(\frac{j\eps_2-m}{j\eps_2} \, \frac{j\eps_2-\eps_1-\eps_2+m}{j\eps_2-\eps_1-\eps_2}\Bigr)
= \prod_{j=1}^k \Bigl(1 - \frac{m(m-\eps_1-\eps_2)}{j\eps_2(j\eps_2-\eps_1-\eps_2)}\Bigr)
\ee
multiplied by the following factor for each $J\neq I_0$,
\be
\begin{aligned}
Z_{I_0 J} & =
\prod_{j=1}^k \Bigl(\frac{a_{I_0}-a_J+\eps_1+j\eps_2-m}{a_{I_0}-a_J+\eps_1+j\eps_2} \,
\frac{a_{I_0}-a_J-\eps_2+j\eps_2+m}{a_{I_0}-a_J-\eps_2+j\eps_2}\Bigr)
\\
& = \prod_{j=1}^k \Bigl(1-\frac{m(m-\eps_1-\eps_2)}{(a_{I_0}-a_J+\eps_1+j\eps_2)(a_{I_0}-a_J-\eps_2+j\eps_2)}\Bigr) .
\end{aligned}
\ee
Under the genericity conditions \eqref{rad1-basic-genericity} and~\eqref{rad1-extra-genericity}, none of the factors vanish, so the logarithm is meaningful and we have
\beall{logZexam-212}
\log|Z_{\vec{Y}}|
& = \sum_{j=1}^k \log\Bigl|1-\frac{m(m-\eps_1-\eps_2)}{j\eps_2(j\eps_2-\eps_1-\eps_2)}\Bigr| \\
& \quad + \sum_{J\neq I_0}^N \sum_{j=1}^k \log\Bigl|1-\frac{m(m-\eps_1-\eps_2)}{(a_{I_0}-a_J+\eps_1+j\eps_2)(a_{I_0}-a_J-\eps_2+j\eps_2)}\Bigr| .
\ee
As $j\to +\infty$ the summand behaves as $O(1/j^2)$ so that the series converges.
(In fact, $Z_{\vec{Y}}$ has a limit.)
Thus, $(-1/k)\log|Z_{\vec{Y}}|\to 0$ as $k\to+\infty$, and by~\eqref{Rdef-Z},
\be
\Rabs \leq \exp(0) = 1 .
\ee

\paragraph{The choice of index~$I_0$.}
We wish to ensure~\eqref{rad1-extra-genericity}.
Consider a directed graph with vertices $I=1,\dots,N$ and with an edge $I\to J$ if
\bel{rad1-graph}
a_J-a_I \in \{m,\eps_1+\eps_2-m\} + \eps_2\ZZ_{\geq 0} .
\ee
If the graph has no loop, then by following successive edges we eventually reach a vertex~$I_0$ with no outgoing edge.  The lack of outgoing edge then precisely means that~\eqref{rad1-extra-genericity} holds.

If instead the graph has a loop, then by summing the $a_J-a_I$ for each edge $I\to J$ in the loop, we obtain
\bel{rad1-klp-integers}
0 = \sum_{I\to J\text{ in loop}} (a_J-a_I)
= k m + l(\eps_1+\eps_2-m) + p \eps_2
\ee
for integers $k,l,p\geq 0$, where $k$ counts the number of edges with $a_J-a_I \in m + \eps_2\ZZ_{\geq 0}$ and $l$ counts the number of edges with $a_J-a_I \in (\eps_1+\eps_2-m) + \eps_2\ZZ_{\geq 0}$, while $p$ accounts for the multiples of~$\eps_2$ in $\eps_2\ZZ_{\geq 0}$.  Since the loop has at least one edge, $(k,l)\neq(0,0)$.

The equation~\eqref{rad1-klp-integers} has solutions with $k=l\neq 0$ if and only if $\eps_1+\eps_2$ is a non-positive rational multiple of $\eps_2$, namely $b^2\in\QQ_{\leq -1}$.  The equation~\eqref{rad1-klp-integers} for $k\neq l\geq 0$ is equivalent to
\be
\sign(l-k)(m-(\eps_1+\eps_2)/2)
= \frac{k+l}{2|l-k|} (\eps_1+\eps_2) + \frac{p}{|l-k|} \eps_2 ,
\ee
and the existence of integer solution is equivalent to the existence of rational solutions (by rescaling $k,l,p$).  The positivity $k,l\geq 0$ implies $k+l\geq|l-k|$ so the first coefficient is at least $1/2$ while the other one is simply non-negative.
Altogether, \eqref{rad1-klp-integers}~has (integer) solutions if and only if
\bel{rad1-exist-integers}
b^2 \in \QQ_{\leq -1} ,
\quad \text{or} \quad
m-\frac{\eps_1+\eps_2}{2} \in \pm \bigl( \eps_2\QQ_{\geq 0} + (\eps_1+\eps_2) \QQ_{\geq 1/2} \bigr) ,
\ee
where the first case corresponds to $k=l\neq 0$, and the second case to $k\lessgtr l$.  The last set is a subset of the set of linear combinations of $\eps_2$ and $\eps_1+\eps_2$ with rational coefficients\footnote{The condition could be refined by noting in~\eqref{rad1-klp-integers} that $k+l\leq N$, so that the rational numbers in~\eqref{rad1-exist-integers} can be restricted to have denominators at most~$2N$.} that are either both non-negative or both non-positive.
If \eqref{rad1-exist-integers} does not hold, then \eqref{rad1-klp-integers} cannot have non-negative integer solutions with $(k,l)\neq (0,0)$, so the graph cannot have any loop.
As explained at the outset below~\eqref{rad1-graph}, the absence of loops ensures the extra genericity~\eqref{rad1-extra-genericity}, as wanted.

\paragraph{Case distinction.}

We now conclude in almost all cases.
\begin{itemize}
\item First case, $b^2\in\CC\setminus\RR$ and $m\not\in(\eps_1+\eps_2)\RR$.
  Then $\eps_1\RR$, $\eps_2\RR$ and $(\eps_1+\eps_2)\RR$ are three distinct lines through the origin, cutting the plane into six angular regions,
  \beal
  R_{12} & = \eps_1[0,+\infty)-\eps_2[0,+\infty) , \\
  R_{1*} & = \eps_1[0,+\infty)+(\eps_1+\eps_2)[0,+\infty) , \\
  R_{2*} & = \eps_2[0,+\infty)+(\eps_1+\eps_2)[0,+\infty) ,
  \ee
  and their opposites $R_{21}=-R_{12}=-\eps_1[0,+\infty)+\eps_2[0,+\infty)$ and $R_{*1}=-R_{1*}$ and $R_{*2}=-R_{2*}$.
  By assumption on~$m$, the point $m-(\eps_1+\eps_2)/2$ lies either outside $R_{1*}\cup R_{*1}$ or outside $R_{2*}\cup R_{*2}$ since their intersection is $(\eps_1+\eps_2)\RR$.
  If it lies outside $R_{2*} \cup R_{*2}$ then \eqref{rad1-exist-integers} does not hold.
  If it lies outside $R_{1*} \cup R_{*1}$ then one must swap $\eps_1\leftrightarrow\eps_2$ before applying the results above, which amounts to considering terms $\vec{Y}=(\dots,\emptyset,\lambda,\emptyset,\dots)$ with $Y_{I_0}=\lambda=(1^k)$ instead of $(k)$.

\item Second case, $b^2\in\CC\setminus\RR$ and $m\in(\eps_1+\eps_2)\RR$, treated momentarily.

\item Third case, $b^2\in\RR\setminus\QQ$, so the lattice $\eps_1\ZZ+\eps_2\ZZ$ is dense in the line $\eps_1\RR=\eps_2\RR=(\eps_1+\eps_2)\RR$.  By the basic genericity assumption~\eqref{rad1-basic-genericity} that $m$ does not lie in that line, \eqref{rad1-exist-integers}~does not hold.

\item Fourth case, $b^2\in\QQ_{<0}$.  Then the genericity condition~\eqref{rad1-basic-genericity} states that $a_I-a_J\pm m\not\in\eps_1\ZZ+\eps_2\ZZ$ so the ``extra'' genericity~\eqref{rad1-extra-genericity} holds, namely the numerators for the first configuration $\vec{Y}$ we considered do not vanish.

\item Fifth case, $b^2\in\QQ_{>0}$.  This was excluded by our assumption~\eqref{rad1-basic-genericity}.  See \autoref{sec:divergence-rational} for specific terms that are singular.
\end{itemize}
In all cases except the second one, we have thus found a sequence of terms $Z_{\vec{Y}}$ that do not decay as $|\vec{Y}|\to+\infty$, so $\Rabs\leq 1$.

\paragraph{A different configuration.}

We are now left with the non-generic case $b^2\in\CC\setminus\RR$ and $m\in(\eps_1+\eps_2)\RR$, for which we must select new terms.
Decompose each Coulomb branch parameter as
\bel{aIeps2eps12}
a_I = \alpha_I \eps_2 + \beta_I (\eps_1+\eps_2) , \qquad \alpha_I,\beta_I\in\RR ,
\ee
and let $S$ be the set of indices $I$ whose $\alpha_I$ is maximal,
\be
S = \bigl\{ I \bigm| \alpha_I = \max_J \alpha_J \bigr\} \subseteq \{1,\dots,N\} .
\ee
Consider now $\vec{Y}$ with $Y_I = \lambda = (k)$ for $I\in S$ and $Y_I = \emptyset$ otherwise.  Then~\eqref{Zstar-21} evaluates to
\beall{Zstar-21-0l0l}
& Z_{\vec{Y}}
= \prod_{I,J\in S} \prod_{j=1}^k
\Bigl(\frac{a_I-a_J + j \eps_2 - m}{a_I-a_J + j \eps_2}\Bigr)
\Bigl(\frac{a_I-a_J-(\eps_1+\eps_2) + j \eps_2+m}{a_I-a_J-(\eps_1+\eps_2) + j \eps_2}\Bigr)
\\
& \quad \times \prod_{I\in S} \prod_{J\not\in S} \prod_{j=0}^{k-1}
\Bigl(\frac{a_I-a_J+(\eps_1+\eps_2) + j \eps_2-m}{a_I-a_J+(\eps_1+\eps_2) + j \eps_2}\Bigr)
\Bigl(\frac{a_I-a_J + j \eps_2+m}{a_I-a_J + j \eps_2}\Bigr) .
\ee
The numerators in the first line involve the difference $a_I-a_J$, which lies in the line $(\eps_1+\eps_2)\RR$ since $\alpha_I=\alpha_J$ for $I,J\in S$, so these numerators all lie in $(\eps_1+\eps_2)\RR+\{1,\dots,k\}\eps_2$ and cannot vanish.
The numerators in the second line also have a positive multiple of $\eps_2$ when decomposed in the form~\eqref{aIeps2eps12} because $\alpha_I>\alpha_J$ for $I\in S$ and $J\not\in S$.
The same logic as~\eqref{logZexam-212} then works: for each $I,J,j$ the factor in~\eqref{Zstar-21-0l0l} is non-zero and has the form $1+O(1/j^2)$ as $j\to +\infty$, so that the product over $j\in\{1,\dots,k\}$ has a finite limit as $k\to+\infty$.  This means that $Z_{\vec{Y}}$ has a finite limit as $k\to+\infty$ and $\Rabs\leq 1$ in this case too.

\section{Optimal convergence}
\label{sec:optimal-convergence}

It is now time to derive the flagship result: absolute convergence of the Nekrasov instanton partition function (one-point torus conformal blocks) with
\be
\Rabs=1 \qquad \text{for } b^2\in\CC\setminus[0,+\infty) .
\ee
Since $\Rabs\leq 1$ by the previous section, we need to show now that all terms~$Z_{\vec{Y}}$ grow at most sub-exponentially with~$|\vec{Y}|$.
We distinguish $b^2\in\CC\setminus\RR$ from $b^2<0$.

In both cases, a consequence of $|1+X|\leq 1+|X|\leq\exp|X|$ is useful:
\beall{logZ22}
\log\bigl|Z_{\lambda\mu}(a,m)\bigr|
& \leq \sum_{(i,j)\in\lambda} \Bigl| \frac{m}{a-\eps_1 (\mu'_j-i) + \eps_2 (\lambda_i-j)} \Bigr| ,
\ee
with the understanding that $\log 0 = -\infty$ is less than any real number.

\subsection{Non-real \texorpdfstring{$b^2$}{b\texttwosuperior}}

\paragraph{Lower bounds on denominators.}

There are two cases of interest to~\eqref{Zstar-21}.
On the one hand, the diagonal factors $Z_{II}$ involve $Z_{\lambda\lambda}(a,m)$ for $a=\eps_2$ or $a=-\eps_1$, whose denominators belong to $(\eps_1\ZZ+\eps_2\ZZ)\setminus\{0\}$ as explained in \autoref{sec:strategy}.
For non-real $b^2$, $\eps_1\ZZ+\eps_2\ZZ$ is a discrete lattice whose points have some minimum pairwise distance $d_{\eps_1,\eps_2}$, which is thus a lower bound on (the absolute value of) denominators in~\eqref{logZ22}.
On the other hand, the off-diagonal factors have $a=a_I-a_J\not\in\eps_1\ZZ+\eps_2\ZZ$ by assumption and, again because this lattice is discrete, the distance of $a$ to the lattice is non-zero and serves as a lower bound on all denominators.
Overall, in both cases of interest to us, we have a lower bound on denominators,
\bega
|a-\eps_1 (\mu'_j-i) + \eps_2 (\lambda_i-j)| \geq D , \\
D \coloneqq \min\Bigl( \dist\bigl(0,(\eps_1\ZZ+\eps_2\ZZ)\setminus\{0\}\bigr) , \min_{I<J} \dist\bigl(a_I-a_J,\eps_1\ZZ+\eps_2\ZZ\bigr) \Bigr) > 0 .
\ee

Next we derive a better lower bound that grows with $\sigma_{ij}=\mu'_j-i+\lambda_i-j$:
\beall{bound-by-sigmaij}
\bigl|-\eps_1 (\mu'_j-i) + \eps_2 (\lambda_i-j)\bigr|
& = |\eps_1+\eps_2| \Bigl|\frac{\eps_2}{\eps_1+\eps_2} \sigma_{ij} + i - \mu'_j\Bigr|
\\
& \geq |\eps_1+\eps_2| \Bigl|\Im\Bigl(\frac{\eps_2}{\eps_1+\eps_2} \sigma_{ij}\Bigr)\Bigr| = D_{\eps_1,\eps_2} |\sigma_{ij}| ,
\ee
where we used that $i-\mu'_j$ and $\sigma_{ij}$ are real, and where
\be
D_{\eps_1,\eps_2}
\coloneqq |\eps_1+\eps_2| \Bigl|\Im\Bigl(\frac{\eps_2}{\eps_1+\eps_2} \Bigr) \Bigr|
= \Bigl| \frac{\eps_2\Im(b^2)}{1+b^2}\Bigr| ,
\ee
which is positive for non-real~$b^2$.
By the triangle inequality we deduce the bound
\bel{lower26}
|a-\eps_1 (\mu'_j-i) + \eps_2 (\lambda_i-j)| \geq \max\bigl(D , D_{\eps_1,\eps_2} |\sigma_{ij}| - |a| \bigr)
\ee
which improves on the previous one for large enough $|\sigma_{ij}|$.

\paragraph{A sub-linear bound on the logarithm.}

The number of boxes with a given value $\sigma_{ij}=s$ cannot grow too large.
Note that $\sigma_{ij}$ is monotonically strictly decreasing in~$i$ and~$j$ (because $\mu'_j$ and $\lambda_i$ are non-increasing), so two boxes $(i,j),(s,t)\in\lambda$ with $\sigma_{ij}=\sigma_{st}$ cannot be in the same row nor in the same column and must have $i<s$ and $j>t$ or the opposite.
Assume now that $q$ boxes $(i_k,j_k)\in\lambda$ for $k=1,\dots,q$ (sorted by non-decreasing row number~$i_k$) share the same value $\sigma_{i_kj_k}=s$.
Then one must have $i_1<\dots<i_q$ and $j_1>\dots>j_q$, hence $i_k\geq k$ and $j_k\geq q+1-k$, namely $\lambda$ contains a staircase partition of side~$q$, so that $|\lambda| \geq q(q+1)/2\geq q^2/2$, which implies a bound on the cardinal
\bel{num-sigma}
\Card\{(i,j)\in\lambda\mid\sigma_{ij}=s\} \leq \sqrt{2|\lambda|}
\ee
for any $s\in\ZZ$.
Organizing the boxes in~\eqref{logZ22} by the value of~$\sigma_{ij}$, and using the lower bound~\eqref{lower26} on each denominator, we deduce
\bel{logZ-28}
\log\bigl|Z_{\lambda\mu}(a,m)\bigr|
\leq \sum_{s=-|\lambda|}^{|\lambda|+|\mu|} \frac{|m|\Card\{(i,j)\in\lambda\mid\sigma_{ij}=s\}}{\max(D,D_{\eps_1,\eps_2}|s|-|a|)} .
\ee
The (non-optimized) range of~$s$ is found by noting that $\sigma_{ij}\leq\mu'_j+\lambda_i\leq|\lambda|+|\mu|$ and $\sigma_{ij}\geq -i\geq -|\lambda|$.
The numerator is bounded above by $|m|\sqrt{2|\lambda|}$ for any~$s$.
On the other hand, the sum of $1/\max(D,D_{\eps_1,\eps_2}|s|-|a|)$ can be split into a finite range $|s|\leq K$ with
\be
K \coloneqq \max\Bigl(\Bigfloor{\frac{D+|a|}{D_{\eps_1,\eps_2}}} , \Bigceil{\frac{|a|}{D_{\eps_1,\eps_2}}} \Bigr) ,
\ee
and the rest, where the sum is a harmonic series that grows logarithmically.
We deduce (where $H_n=1+1/2+\dots+1/n$ are harmonic numbers)
\beal
\log\bigl|Z_{\lambda\mu}(a,m)\bigr|
& \leq |m|\sqrt{2|\lambda|} \biggl( \frac{2K+1}{D} + \frac{H_{|\lambda|}+H_{|\lambda|+|\mu|}}{D_{\eps_1,\eps_2}} \biggr)
\\
& \leq C |\lambda|^{1/2} \log\bigl(2+|\lambda|+|\mu|\bigr)
\ee
for a sufficiently large constant~$C$ (depending on $\eps_1,\eps_2,m,a$).
(The artificial shift by~$2$ ensures that the logarithm is positive even for empty diagrams.)

\paragraph{Absolute convergence radius.}

Going back to the expression~\eqref{Rdef-Z} of the convergence radius, we have (recall that $Z_{IJ}$ is a product of two $Z_{\lambda\mu}$ factors)
\beal
- \log \Rabs
& = \limsup_{\vec{Y}} \frac{1}{|\vec{Y}|} \sum_{I,J=1}^N \log |Z_{IJ}|
\\
& \leq 2 N^2 C \limsup_{\vec{Y}} |\vec{Y}|^{-1/2} \log\bigl(2 + 2 |\vec{Y}|\bigr) = 0 .
\ee
Combining with $\Rabs\leq 1$ from \autoref{sec:Rabs_le_1}, we get $\Rabs=1$ as desired.

\subsection{Negative \texorpdfstring{$b^2$}{b\texttwosuperior}}

\paragraph{Lower bounds on denominators.}

We now turn to the case $b^2<0$.  Our starting point is again the easy bound~\eqref{logZ22} on $Z_{\lambda\mu}(a,m)$.
Denominators for the diagonal~$Z_{II}$, divided by~$\eps_2$, take the form $-kb^2+l$ for $k,l\geq 0$ and $(k,l)\neq(0,0)$, which are bounded below by $\min(-b^2,1)>0$, hence the absolute value of these denominators is at least $\min(|\eps_1|,|\eps_2|)$.
Denominators for the off-diagonal $Z_{IJ}$ are bounded below by the distance between $a_I-a_J$ and the lattice $\eps_1\ZZ+\eps_2\ZZ$, which is positive since we have assumed that $a_I-a_J$ is away from the lattice's closure.
We deduce that for both cases of interest to us,
\bega
|a-\eps_1 (\mu'_j-i) + \eps_2 (\lambda_i-j)| \geq D , \\
D \coloneqq \min\Bigl( |\eps_1|, |\eps_2|,  \min_{I<J} \dist\bigl(a_I-a_J,\eps_1\ZZ+\eps_2\ZZ\bigr) \Bigr) > 0 .
\ee
As in the previous case, we seek an improved bound that depends on the box $(i,j)$.
The bound~\eqref{bound-by-sigmaij} by~$\sigma_{ij}$ does not hold because $\mu'_j-i$ has no sign.
Rather than $\sigma_{ij}$ we introduce the more elaborate integer
\be
\tau_{ij} = \Bigfloor{\frac{(-b^2)(\mu'_j-i)+(\lambda_i-j)}{\min(1,-b^2)}}.
\ee
To use this integer $\tau_{ij}$, we write a lower bound
\be
|\eps_1 (\mu'_j-i) + \eps_2 (\lambda_i-j)|
\geq |\eps_2| \min(1,-b^2) \biggl| \frac{(-b^2)(\mu'_j-i)+(\lambda_i-j)}{\min(1,-b^2)} \biggr|
\ee
and we note that the last absolute value is bounded below by $|\tau_{ij}| - \onebf_{\tau_{ij}<0}$ since the ratio is in the interval $[\tau_{ij},\tau_{ij}+1)$.
Altogether,
\be
|a-\eps_1 (\mu'_j-i) + \eps_2 (\lambda_i-j)| \geq \max\Bigl(D ,
\min(|\eps_1|,|\eps_2|) \bigl(|\tau_{ij}| - \onebf_{\tau_{ij}<0}\bigr) - |a| \Bigr) .
\ee

\paragraph{A speedy proof.}

The proof then goes through as in the previous case, with the combination $\tau_{ij}$ replacing~$\sigma_{ij}$.
The key point is that $\tau_{ij}$ is strictly monotonically decreasing in $i,j$ because the argument of $\floor{\ }$ decreases by at least $1/\min(1,-b^2)\geq 1$ when $j$ is increased, and $-b^2/\min(1,-b^2)\geq 1$ when $i$ in increased.
Thus, boxes with the same $\tau_{ij}$ must be in different rows and columns with the appropriate opposite order, which again shows that the number of boxes with a given $\tau_{ij}=s$ is at most $\sqrt{2|\lambda|}$.
The upper bound~\eqref{logZ-28} becomes
\be
\log\bigl|Z_{\lambda\mu}(a,m)\bigr|
\leq \sum_{s=\frac{-|\lambda|(-b^2)}{\min(1,-b^2)}}^{\frac{(|\lambda|-b^2|\mu|)}{\min(1,-b^2)}} \frac{|m|\Card\{(i,j)\in\lambda,\tau_{ij}=s\}}{\max\Bigl(D,\min(|\eps_1|,|\eps_2|)\bigl(|s|-\onebf_{s<0}\bigr)-|a|\Bigr)} .
\ee
Apart from a finite middle part, the sum is once more a harmonic sum with bounds that grow linearly in $|\lambda|+|\mu|$, so
\be
\log\bigl|Z_{\lambda\mu}(a,m)\bigr|
\leq C |\lambda|^{1/2} \log\bigl(2+|\lambda|+|\mu|\bigr)
\ee
for a sufficiently large constant~$C$ (depending on $\eps_1,\eps_2,m,a$).
Since this bound is sublinear, it ensures that $-|\vec{Y}|^{-1}\log|Z_{\vec{Y}}|$ tends to zero so that $\log\Rabs\geq 0$ and we conclude that $\Rabs=1$ exactly.

\section{A variant of Brjuno numbers}
\label{sec:sup-Brjuno}

To treat the case of non-negative~$b^2$, one must distinguish rational values of~$b^2$ and two types of irrational values, distinguished by how closely they are approached by rationals of a given denominator.
The required relaxed variant of Brjuno numbers is introduced in this section.  This relies on a preliminary discussion of continued fractions.
To keep the story general, the positive irrational number $b^2$ is denoted~$x$ instead.

\subsection{Continued fractions}

\paragraph{Basics.}

We shall state with only minimal proof some well-known properties of continued fractions.
Irrational numbers $x\in\RR\setminus\QQ$ can be represented bijectively by a continued fraction $x=[a_0;a_1,\dots]$ with integer coefficients $a_0\in\ZZ$ and $a_n\geq 1$ for $n\geq 1$.  These coefficients are defined iteratively by (for $n\geq 0$)
\begin{equation}
  x_0 = x , \qquad a_n = \lfloor x_n\rfloor , \qquad x_{n+1} = 1/\{x_n\} ,
\end{equation}
where $\{x_n\}$ is the fractional part.
By construction one has (for $n\geq 0$)
\begin{equation}\label{x-continued-xn}
  x = a_0 + \frac{1}{a_1+\frac{1}{a_2+\frac{1}{\cdots+\frac{1}{a_n+\{x_n\}}}}} .
\end{equation}

It is also useful to introduce \emph{convergents} $p_n/q_n$ by $(p_{-2},q_{-2})=(0,1)$ and $(p_{-1},q_{-1})=(1,0)$ and the three-term recursion relation
\begin{equation}
  p_n = a_n p_{n-1} + p_{n-2} , \qquad
  q_n = a_n q_{n-1} + q_{n-2} .
\end{equation}
This leads for instance to $(p_0,q_0)=(a_0,1)$ and $(p_1,q_1)=(a_1 a_0+1,a_1)$.

An easy induction shows that
\begin{equation}\label{cont-frac-recursion}
  x = \frac{p_nx_{n+1} + p_{n-1}}{q_nx_{n+1} + q_{n-1}} .
\end{equation}
By taking the limit $\{x_n\}=1/x_{n+1}\to 0$ in \eqref{x-continued-xn} and~\eqref{cont-frac-recursion}, one learns that $p_n/q_n$ is simply the rational number obtained by truncating the continued fraction,
\begin{equation}
  \frac{p_n}{q_n} = [a_0;a_1,\dots,a_n] = a_0 + \frac{1}{a_1+\frac{1}{a_2+\frac{1}{\cdots+\frac{1}{a_n}}}} .
\end{equation}

Another consequence of~\eqref{cont-frac-recursion} is that $(q_{n-1}x-p_{n-1}) / (q_nx-p_n) = -x_{n+1} < 0$, so that $(-1)^n(q_nx-p_n)$ has a constant sign, positive.  The even (resp.\ odd) convergents are monotonically increasing (resp.\ decreasing) towards~$x$:
\begin{equation}
  \frac{p_0}{q_0} < \frac{p_2}{q_2} < \frac{p_4}{q_4} < \dots < x < \dots < \frac{p_5}{q_5} < \frac{p_3}{q_3} < \frac{p_1}{q_1} .
\end{equation}

\paragraph{Equidistribution.}

The next lemma states that if $x\simeq p/q$ then many groups of $q$~consecutive fractional parts $\{kx\},\allowbreak\{(k+1)q\},\allowbreak\dots,\allowbreak\{(k+q-1)x\}$ are equidistributed in the sense that there is one in each interval $[0,1/q),\allowbreak[1/q,2/q),\allowbreak\dots,\allowbreak[(q-1)/q,1)$.
In particular this applies to the convergents, with $(p,q,r,\epsilon)=(p_n,q_n,q_{n+1},(-1)^n)$, thanks to the bound~\eqref{estimate-quality-qnpn} below.
For an integer interval $I=\intset{k}{l}$ we denote $-I=\intset{-l}{-k}$.

\begin{lemma}[Equidistribution of irrational orbits]\label{lem:equidistribution}
  1. For any $x\in\RR$ and integer $q\geq 1$, for any $j\in\ZZ$, the residue of $\lfloor qjx\rfloor$ modulo~$q$ is
  \begin{equation}
    \varphi_{q,x}(j) \coloneqq \lfloor q\{jx\}\rfloor\in\intset{0}{q-1} .
  \end{equation}

  2. For any coprime integers $p\in\ZZ$ and $q\geq 1$, and any integer $r\geq 1$ and sign $\epsilon=\pm$ such that $0\leq \epsilon(qx-p)<1/r$, and for any integer interval $I$ with $q$~elements such that $\epsilon I$ lies entirely in the interval $\intset{-r}{-1}$ or $\intset{0}{r}$, the restriction $\varphi_{q,x}:I\to\intset{0}{q-1}$ is a bijection, given explicitly in \eqref{phiqx1} and~\eqref{phiqx2}.
\end{lemma}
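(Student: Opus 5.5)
Part 1 is a direct computation.
Write $jx=\floor{jx}+\{jx\}$, so that $qjx=q\floor{jx}+q\{jx\}$.
Since $q\floor{jx}$ is an integer, it passes through the floor and
\be
\floor{qjx}=q\floor{jx}+\floor{q\{jx\}} .
\ee
The second term lies in $\intset{0}{q-1}$ because $\{jx\}\in[0,1)$, so it is the residue of $\floor{qjx}$ modulo~$q$.

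For part 2, I would write $qx=p+\epsilon\delta$ with $0\leq\delta<1/r$.
Then $qjx=pj+\epsilon\delta j$, and by part 1,
\be
\varphi_{q,x}(j)\equiv pj+\floor{\epsilon\delta j} \pmod q .
\ee

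Consider first the case $\epsilon=+$ and $I\subseteq\intset{0}{r}$.
Then $0\leq\delta j<j/r\leq 1$ for $j\geq 0$, so $\floor{\delta j}=0$ and $\varphi_{q,x}(j)=(pj\bmod q)$.
The interval $I$ consists of $q$ consecutive integers and $p$ is coprime to~$q$, so $j\mapsto pj\bmod q$ is a bijection $I\to\intset{0}{q-1}$.
This gives the explicit formula of the form~\eqref{phiqx1}.

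In the case $\epsilon=+$ and $I\subseteq\intset{-r}{-1}$, for $j\in\intset{-r}{-1}$ we get $-1\leq -\abs{j}/r<\delta j\leq 0$.
So $\floor{\delta j}=-1$ if $\delta>0$, and $\floor{\delta j}=0$ if $\delta=0$.
Either way the correction is a constant shift on $I$, so $\varphi_{q,x}(j)=(pj-c\bmod q)$ with $c\in\{0,1\}$, which is again a bijection.
This gives the formula of the form~\eqref{phiqx2}.

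The case $\epsilon=-$ reduces to the previous ones via $j\to -j$, which is why the hypothesis is phrased in terms of $\epsilon I$.
For $j\in\intset{-r}{-1}$ with $\epsilon=-$, the quantity $\epsilon\delta j=\delta\abs{j}$ lies in $[0,1)$, so the floor vanishes.
For $j\in\intset{0}{r}$ it lies in $(-1,0]$, so the floor is the constant $-\onebf_{\delta j>0}$.

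The main obstacle is the boundary: at $j=0$ the value $\delta j=0$ gives floor $0$ rather than $-1$.
This would break the constant-shift argument only if $0$ and nonzero $j$ fell in the same sign branch with $\delta>0$.
I would check that the sign conventions of the statement, where $\epsilon I$ lies in $\intset{-r}{-1}$ or $\intset{0}{r}$, always place $j=0$ in the branch where the floor is identically $0$.
With $\epsilon=+$, $j=0$ belongs to $\intset{0}{r}$ and the floor vanishes there, as shown above.
With $\epsilon=-$, $j=0$ gives $\epsilon j=0\in\intset{0}{r}$ and $\epsilon\delta j=0$, again floor $0$, consistent with the $j\leq 0$ branch.
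The case $\delta=0$, which only occurs when $x=p/q$ is rational, is handled by the same formulas.

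Once these floor values are pinned down, bijectivity in every case follows from the coprimality of $p$ and~$q$.
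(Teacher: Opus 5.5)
Your proposal is correct and is essentially the paper's argument: both write $\lfloor qjx\rfloor = pj + \lfloor j(qx-p)\rfloor$, observe that this correction is $0$ when $\epsilon j\in\intset{0}{r}$ and is $-1$ (or $0$ if $x=p/q$) when $\epsilon j\in\intset{-r}{-1}$, so it is constant on $I$, and then conclude by coprimality of $p$ and $q$. The paper only streamlines the presentation: it first disposes of $x=p/q$, then multiplies the strict hypothesis $0<\epsilon(qx-p)<1/r$ directly by $\epsilon j$, which replaces your case split on $\epsilon$ and your separate check at $j=0$.
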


\begin{proof}
  For the first point, $q\{jx\}=q(jx-\lfloor jx\rfloor)$ so $q\{jx\}$ and $qjx$ agree as reals modulo~$q$.
  Taking the floor on both sides gives $\varphi_{q,x}(j)=\lfloor qjx\rfloor \bmod{q}$ and we conclude by $0\leq q\{jx\}<q$.
  The second point is a variation on a proof in \url{https://mathoverflow.net/a/313290}, with a wider set of intervals requiring a case distinction.

  For $j\in\ZZ$, the residue of $qj(p/q)=pj$ modulo~$q$ is the integer
  \begin{equation}
    \varphi_{q,p/q}(j)\coloneqq q\{pj/q\}\in\intset{0}{q-1} .
  \end{equation}
  Since $p$ and $q$ are coprime, multiplication by~$p$ modulo~$q$ is a bijection of $\ZZ/q\ZZ$, so the restriction of $\varphi_{q,p/q}$ to any interval of $q$ consecutive integers is a bijection to $\intset{0}{q-1}$.  Thus, for $x=p/q$ we are done.  We henceforth assume a strict inequality $0<\epsilon(qx-p)<1/r$.

  Consider first the range $\epsilon I\subset\intset{0}{r}$.
  For any $j\in\epsilon\intset{1}{r}$, the condition on~$x$, multipled by the positive number $\epsilon j$, gives $0<j(qx-p)<\epsilon j/r\leq 1$, thus $jqx\in(pj,pj+1)$ so $\lfloor qjx\rfloor=pj$, and by taking residues modulo~$q$,
  \begin{equation}\label{phiqx1}
    \varphi_{q,x}(j) = \varphi_{q,p/q}(j) , \qquad j\in \epsilon\intset{0}{r} ,
  \end{equation}
  where we included the trivial case $j=0$ in the conclusion.
  The restriction of $\varphi_{q,x}$ to any $q$-element interval $I\subset\epsilon\intset{0}{r}$ is then simply the restriction of $\varphi_{q,p/q}$, which is bijective.

  For the range $\epsilon I\subset\intset{-r}{-1}$, we multiply the condition on~$x$ by a negative integer $\epsilon j\in\intset{-r}{-1}$ to get $-1\leq\epsilon j/r<j(qx-p)<0$, so that $jqx\in(pj-1,pj)$, which leads to
  \begin{equation}\label{phiqx2}
    \varphi_{q,x}(j) = \varphi_{q,p/q}(j) - 1 \mod q , \qquad j\in \epsilon\intset{-r}{-1} .
  \end{equation}
  The restriction of $\varphi_{q,x}$ to any $q$-element subinterval $I\subset\epsilon\intset{-r}{-1}$ is thus the composition of two bijections, $\varphi_{q,p/q}$~restricted to~$I$, and a shift by~$1$ modulo~$q$.
\end{proof}

\paragraph{Quality of approximation.}

From \eqref{cont-frac-recursion} and $q_np_{n-1} - p_nq_{n-1}=(-1)^n$ (proven by an easy induction) and $q_n x_{n+1} + q_{n-1} = q_{n+1} + q_n \{x_{n+1}\}$, one has
\begin{equation}\label{estimate-quality-qnpn}
  0 \leq (-1)^n (q_n x - p_n) = \frac{1}{q_{n+1}+q_n\{x_{n+1}\}} \in \Bigl( \frac{1}{2q_{n+1}}, \frac{1}{q_{n+1}} \Bigr] .
\end{equation}
This estimate of the quality of the approximation $x\simeq p_n/q_n$ will prove essential.
Rather than numerators and denominators of approximations to~$x$, it is often more convenient to consider the fractional parts $\{qx\}$ and $1-\{qx\}=\{-qx\}$ (for irrational~$x$), and collectively the distance $\dist(qx,\ZZ)=\min(\{qx\},\{-qx\})$.

\begin{lemma}\label{lem:dist-greater}
  For $j\in(q_n,q_{n+1})$ one has $\dist(jx,\ZZ)>\dist(q_nx,\ZZ)$.
\end{lemma}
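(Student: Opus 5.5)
The plan is to derive the claim from the equidistribution Lemma~\ref{lem:equidistribution}, applied with $(p,q,r,\epsilon)=(p_n,q_n,q_{n+1},(-1)^n)$. This choice is legitimate by~\eqref{estimate-quality-qnpn}, which gives $0\leq\epsilon(q_nx-p_n)\leq 1/q_{n+1}$; irrationality of $x$ makes the first inequality strict. The upper bound is not strictly less than $1/q_{n+1}$, so we may need $r=q_{n+1}-1$ instead. This still covers all $j<q_{n+1}$, so nothing is lost. Write $\delta\coloneqq\dist(q_nx,\ZZ)=\abs{q_nx-p_n}$, which holds because $\delta<1/2$ once $q_{n+1}\geq 2$. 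The small edge cases, such as $n=0$ with $q_1=1$, will be handled separately; for $q_{n+1}\leq q_n+1$ the interval is empty and the claim is vacuous.

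Fix $j\in(q_n,q_{n+1})$ and suppose, for contradiction, that $\dist(jx,\ZZ)\leq\delta$. The idea is to compare $jx$ with $(j-q_n)x$. Set $j'=j-q_n\in\intset{1}{q_{n+1}-q_n-1}$. Then
\[
jx-(j-q_n)x=q_nx=p_n+\epsilon\delta,
\]
so the fractional parts of $jx$ and $j'x$ differ by exactly $\epsilon\delta$ modulo~$1$.

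A first attempt is to use the lemma more concretely. Take the $q_n$-element interval $I=\intset{j-q_n+1}{j}$, or its mirror image; it lies in $\epsilon$-compatible range by the hypothesis $j<q_{n+1}$. On $I$ the values $\floor{q_n\{kx\}}$ are a bijection onto $\intset{0}{q_n-1}$, so each subinterval $[\ell/q_n,(\ell+1)/q_n)$ contains exactly one $\{kx\}$. By~\eqref{phiqx1}, for $\epsilon k\in\intset{0}{r}$ we also have $\floor{q_nkx}=p_nk$. Hence
\[
q_nkx=p_nk+\theta_k,\qquad \theta_k=\epsilon k\delta\in(0,1)\ \text{(signed by $\epsilon$)}.
\]
This explicit description gives
\[
\{kx\}=\frac{\{p_nk/q_n\}\,q_n+\epsilon k\delta}{q_n}.
\]
With it, $\dist(kx,\ZZ)$ is small only when $p_nk\equiv 0$ or $\equiv\pm1\pmod{q_n}$. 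The case $p_nk\equiv 0$ forces $q_n\mid k$, which gives $\dist(kx,\ZZ)=(k/q_n)\delta\geq 2\delta$ for $k\geq 2q_n$; the case $k=q_n$ is excluded. In the other cases the distance is of order $1/q_n-k\delta/q_n$, which exceeds $\delta$ because $k\delta\leq k/q_{n+1}<1$ and $\delta<1/q_{n+1}$. Explicitly,
\[
\frac{1-k\delta}{q_n}>\delta\iff 1>(k+q_n)\delta,
\]
and this holds when $k+q_n\leq q_{n+1}$. That inequality fails near the top of the range, so the argument needs refinement there.

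\textbf{Main obstacle.} The delicate point is exactly this boundary regime $q_{n+1}-q_n<j<q_{n+1}$ together with the residue $\pm1$ case, where the crude estimates are tight. There I would instead use the classical best-approximation argument via unimodularity: solve $j=uq_n+vq_{n+1}$ and $jx-m=u(q_nx-p_n)+v(q_{n+1}x-p_{n+1})$ with integers $u,v$, using $q_np_{n+1}-p_nq_{n+1}=\pm1$. Since $0<j<q_{n+1}$, we get $u,v\neq0$ with opposite signs. Because $q_nx-p_n$ and $q_{n+1}x-p_{n+1}$ also have opposite signs, the two terms add with the same sign, giving $\abs{jx-m}\geq\abs{u}\,\abs{q_nx-p_n}\geq\delta$. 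Equality would force $v(q_{n+1}x-p_{n+1})=0$, which is impossible for irrational $x$. This yields the strict inequality directly and is likely the cleanest route overall, superseding the equidistribution computation.
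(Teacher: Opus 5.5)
Your final argument is correct up to one slip, and it takes a different route from the paper's. The slip is the claim that $0<j<q_{n+1}$ forces $u,v\neq 0$. It only forces $u\neq 0$. The case $v=0$ occurs exactly when $q_n\mid j$ and $m=(j/q_n)p_n$, e.g.\ $j=2q_n<q_{n+1}$ with $m=2p_n$, so it must be treated separately. It is immediate: $|jx-m|=(j/q_n)\,|q_nx-p_n|\geq 2|q_nx-p_n|$. With this added, you get $|jx-m|>|q_nx-p_n|$ for every integer $m$, hence $\dist(jx,\ZZ)>|q_nx-p_n|\geq\dist(q_nx,\ZZ)$ directly. The identification $\delta=\dist(q_nx,\ZZ)$ and the edge-case discussion then become unnecessary. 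Your worry about choosing $r=q_{n+1}-1$ is also moot: for irrational $x$ the upper bound in \eqref{estimate-quality-qnpn} is strict, since $\{x_{n+1}\}>0$.

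The paper instead applies Lemma~\ref{lem:equidistribution} with $(p_n,q_n,q_{n+1},(-1)^n)$. This shows that $q_n\{\pm jx\}$ has nonzero integer part, so $\{\pm jx\}\geq 1/q_n$, except in two residue classes, which it handles by hand:
\begin{itemize}
\item $j=kq_n$ with $2\leq k\leq a_{n+1}$;
\item $j=kq_n+q_{n-1}=q_{n+1}-(a_{n+1}-k)q_n$ with $1\leq k\leq a_{n+1}-1$.
\end{itemize}
These correspond exactly to your decompositions $(u,v)=(k,0)$ and $(u,v)=(k-a_{n+1},1)$. In particular, the paper's first special case is precisely the $v=0$ case you dropped.

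The paper's route recycles the equidistribution lemma, which it needs anyway for Proposition~\ref{prop:sum-log}. It also yields the slightly stronger bound $\dist(jx,\ZZ)>1/q_{n+1}$. Yours is the classical best-approximation argument: self-contained, uniform in $j$, and free of residue bookkeeping.

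Incidentally, your abandoned first attempt was not actually stuck. Since $p_nq_{n-1}\equiv(-1)^{n+1}\pmod{q_n}$, the only residue class where the distance equals $(1-k\delta)/q_n$ is $k\equiv q_{n-1}$, i.e.\ $k=cq_n+q_{n-1}$ with $c\leq a_{n+1}-1$. For these $k$, $k+q_n\leq q_{n+1}$ holds automatically, so the feared boundary regime never arises. Completed this way, that attempt is essentially the paper's proof.
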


\begin{proof}
  By~\eqref{estimate-quality-qnpn}, $\dist(q_nx,\ZZ)\leq 1/q_{n+1}$.
  It is thus sufficient to prove that $\{jx\}$ and $\{-jx\}$ are larger than this.
  We apply Lemma~\ref{lem:equidistribution}, for $(p,q,r,\epsilon)=(p_n,q_n,q_{n+1},(-1)^n)$, to find the integer parts of $q_n\{jx\}$ and $q_n\{-jx\}$, which depend only on the residue of $j$ modulo~$q_n$ and its sign.  Whenever these integer parts are non-zero, we are done.  There remains two tasks:
  \begin{itemize}
  \item to show $\{(-1)^n jx\}>1/q_{n+1}$ for $j=kq_n$ with $k\in\intset{2}{a_{n+1}}$;
  \item to show $\{(-1)^{n+1} jx\}>1/q_{n+1}$ for $j=kq_n+q_{n-1}$ with $k\in\intset{1}{a_{n+1}-1}$.
  \end{itemize}
  (The upper bounds on~$k$ come from $\lfloor q_{n+1}/q_n\rfloor=a_{n+1}$.)
  For the first task, observe that $\{(-1)^n jx\}=k\{(-1)^n q_n x\}$ since they agree modulo~$1$ and the latter is bounded above by $k/q_{n+1}<1$.  The desired lower bound then arises from the lower bound in~\eqref{estimate-quality-qnpn} and $k\geq 2$.
  For the second task, write $j=q_{n+1}-(a_{n+1}-k)q_n$ and observe that
  \begin{equation}
    \{(-1)^{n+1} jx\} = \{(-1)^{n+1}q_{n+1}x\} + (a_{n+1}-k) \{(-1)^n q_n x\}
  \end{equation}
  since they agree modulo~$1$ and the latter is bounded by $(1+a_{n+1}-k)/q_{n+1}<1$.
  The desired lower bound then comes from $a_{n+1}-k\geq 1$ and positivity of the first term.
\end{proof}

\subsection{Exponential type}

\paragraph{The notion of interest.}

A \emph{Brjuno number} is an irrational number~$x$ whose convergents $p_n/q_n$ are such that the Brjuno function takes a finite value,
\begin{equation}
  B(x) \coloneqq \sum_{n\geq 0} \frac{\log q_{n+1}}{q_n} < +\infty .
\end{equation}
This is a very large class, which only excludes irrational numbers that are extremely well approximated by rationals.
Replacing the sum by a supremum yields the notion of numbers of \emph{finite exponential type} (see, e.g.~\cite{2205.00253}), with several characterizations whose equivalence is proven below.\footnote{I have been unable to confirm how standard this terminology is, and propose the alternative \emph{sup-Brjuno numbers} to emphasize the link to Brjuno numbers.  As a side-note, restricting the first lim sup in Definition~\ref{def:expo-type} to only even or only odd~$n$, and correspondingly replacing $\dist(qx,\ZZ)$ by $\{qx\}$ or $\{-qx\}$ in other expressions appears to define an interesting notion too.}

\begin{definition}\label{def:expo-type}
  The \emph{exponential type} of an irrational number $x\in\RR\setminus\QQ$ is
  \[
    \begin{aligned}
      B_{\sup}(x)
      \, & \! \coloneqq \limsup_{n\to+\infty} \frac{\log q_{n+1}}{q_n}
      = \limsup_{q\to+\infty} \frac{\log(1/\dist(qx,\ZZ))}{q} \\
      & = \inf \Bigl\{B\in[0,+\infty)\Bigm| e^{Bq}\dist(qx,\ZZ) \xrightarrow{q\to+\infty} +\infty\Bigr\} \\
      & = \inf \Bigl\{B\in[0,+\infty)\Bigm| \dist(qx,\ZZ) \gtrsim e^{-Bq} ,  \ q\to+\infty \Bigr\} .
    \end{aligned}
  \]
  The number is said to be of \emph{finite exponential type} if that quantity is finite.  It is otherwise said to be of infinite exponential type, or \emph{super-exponentially well approximable} (by rationals).
\end{definition}

\begin{remark}
  Almost every irrational number~$x$ has $B_{\sup}(x)=0$, in the sense that the set $\{x\mid B_{\sup}(x)>0\}$ has measure zero.
  In fact, every Brjuno number has $B_{\sup}(x)=0$ since the convergence of the series defining $B(x)$ requires its terms to tend to zero.
  A number with $B_{\sup}(x)>0$ is necessarily a Liouville number (i.e., has rational approximations with $|x-p/q|<1/q^K$ for arbitrarily large~$K$).
  Examples are somewhat artificial, such as $x=\sum_{n\geq 0} 1/(2\uparrow\uparrow n)$ where $2\uparrow\uparrow 0 = 1$ and $2\uparrow\uparrow n=2^{2\uparrow\uparrow(n-1)}$ is the Knuth up arrow notation, which has $B_{\sup}(x)=\log 2$.
\end{remark}

\paragraph{Equivalence proof.}

The equality of these three expressions (which we denote temporarily by $B_1,B_2,B_3,B_4$ to express this proof) is straightforward.
\begin{itemize}
\item Proof of $B_1=B_2$.  For $q\in[q_n,q_{n+1})$, one has $\log(1/\dist(qx,\ZZ))/q \leq \log(1/\dist(q_nx,\ZZ)) / q_n$ by Lemma~\ref{lem:dist-greater} so the lim sup defining~$B_2$ is the same as the lim sup restricted to $\{q_n\mid n\geq 0\}$.
Observe then that $\dist(q_nx,\ZZ)\in(1/(2q_{n+1}),1/q_{n+1}]$ so
\be
\frac{1}{q_n} \log\frac{1}{\dist(q_nx,\ZZ)}
= \frac{1}{q_n} \log q_{n+1} + o(1) , \qquad n\to+\infty ,
\ee
which has $B_1$ as its lim sup.
\item Proof of $B_3\leq B_2$.  Assume by contradiction that $B_2<B_3$ and take $B_2<B<B'<B_3$.  Then for $q$~large enough, $\log(1/\dist(qx,\ZZ))/q<B$, namely $e^{Bq}\dist(qx,\ZZ)>1$, thus $e^{B'q}\dist(qx,\ZZ)\to+\infty$ as $q\to+\infty$, which contradicts the definition of~$B_3$.
\item Proof of $B_4\leq B_3$.  This is trivial since $e^{Bq}\dist(qx,\ZZ)\to+\infty$ implies $\dist(qx,\ZZ)\gtrsim e^{-Bq}$.
\item Proof of $B_2\leq B_4$.  Assume by contradiction that $B_4<B_2$.  Then by definition of~$B_4$ there is some $B\in(B_4,B_2)$ with $e^{Bq}\dist(qx,\ZZ)\geq 1/C$ for some positive constant $C>0$, so $(1/q)\log(1/\dist(qx,\ZZ))<B+(\log C)/q$, which tends to~$B$, thus contradicting the definition of~$B_2$.
\end{itemize}

\paragraph{Covariance properties.}
It is interesting to consider how $B_{\sup}$ behaves under simple transformations.

\begin{proposition}\label{prop:B-inverse}
  Given an irrational number $x\in\RR\setminus\QQ$ and an integer $l\in\ZZ$, one has
  \[
  B_{\sup}(x) = B_{\sup}(x+l) = B_{\sup}(-x) = |x| B_{\sup}(1/x) .
  \]
  As a result, for any matrix $\bigl(\begin{smallmatrix}a&b\\c&d\end{smallmatrix}\bigr)\in GL(2,\ZZ)$ one has
  \[
  B_{\sup}\Bigl(\frac{ax+b}{cx+d}\Bigr) = \frac{1}{|cx+d|} B_{\sup}(x) .
  \]
\end{proposition}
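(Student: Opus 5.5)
The plan is to work throughout with the characterization $B_{\sup}(x)=\limsup_{q\to+\infty}\frac{1}{q}\log\frac{1}{\dist(qx,\ZZ)}$ from Definition~\ref{def:expo-type}, proven equal to the continued-fraction expression above. The first two equalities are immediate from it. For integer $l$ one has $\dist(q(x+l),\ZZ)=\dist(qx+ql,\ZZ)=\dist(qx,\ZZ)$, since $ql\in\ZZ$. Likewise $\dist(-qx,\ZZ)=\dist(qx,\ZZ)$. So the lim sup is unchanged under $x\mapsto x+l$ and $x\mapsto -x$.

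The main step is $B_{\sup}(x)=|x|B_{\sup}(1/x)$. By the two symmetries already established, we may assume $x>1$; the case $0<x<1$ then follows by swapping the roles of $x$ and $1/x$, which leaves the identity invariant. Now pass to the characterization $B_1=\limsup_n \log q_{n+1}/q_n$ via convergents. If $x=[a_0;a_1,a_2,\dots]$ with $a_0\geq 1$, then $1/x=[0;a_0,a_1,\dots]$. Consequently the convergents of $1/x$ are $q_n/p_n$, shifted by one index: the denominators of the convergents of $1/x$ are $p_{n-1}$, with $p_n/q_n\to x$. Then
\[
\limsup_n \frac{\log p_{n+1}}{p_n}=\limsup_n \frac{\log q_{n+1}+\log(p_{n+1}/q_{n+1})}{q_n\,(p_n/q_n)}=\frac{1}{x}\limsup_n\frac{\log q_{n+1}}{q_n},
\]
because $\log(p_{n+1}/q_{n+1})\to\log x$ is bounded while $q_n\to\infty$, and the factor $p_n/q_n\to x>0$ converges. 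This gives $B_{\sup}(1/x)=B_{\sup}(x)/x$, which is the claim. Alternatively, one can argue directly on the $\dist$ form: a good approximation $|qx-p|=\delta$ gives $|p(1/x)-q|=\delta/x$ with $p\approx qx$, and the exponent rescales by $1/x$. The convergent bookkeeping is the step needing the most care, namely the index shift and the $+\infty$ cases. If $B_{\sup}(x)=+\infty$, the same computation shows both sides are infinite, so no separate argument is needed.

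For the $GL(2,\ZZ)$ statement, use the fact that $GL(2,\ZZ)$ is generated by $x\mapsto x+1$, $x\mapsto -x$ and $x\mapsto 1/x$. Define $c(M,x)=|cx+d|$ for $M=\bigl(\begin{smallmatrix}a&b\\c&d\end{smallmatrix}\bigr)$. This satisfies the cocycle rule $c(MN,x)=c(M,Nx)\,c(N,x)$, as follows from $M(Nx)$ having denominator $(c\,(N x)+d)$ times that of $Nx$ after clearing denominators. The generators have $c=1,1,|x|$ respectively. The proven identities say $B_{\sup}(Mx)=B_{\sup}(x)/c(M,x)$ for each generator; for $1/x$ this reads $B_{\sup}(1/x)=B_{\sup}(x)/|x|$. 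Both sides of this relation are multiplicative along compositions by the cocycle property. Induction on word length then finishes the proof.
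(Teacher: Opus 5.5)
Your proof is correct. The translation and negation invariance, and the reduction of the $GL(2,\ZZ)$ statement to generators via the cocycle $|cx+d|$, are exactly what the paper does. The difference lies in the inversion step.

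\textbf{Your route.} For $x>1$ you identify the continued fraction of $1/x$ as $[0;a_0,a_1,\dots]$, so its convergent denominators are exactly $p_{n-1}$. The characterization $B_{\sup}=\limsup_n \log q_{n+1}/q_n$ then gives the equality $B_{\sup}(1/x)=B_{\sup}(x)/x$ in one step, including the infinite case.

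\textbf{The paper's route.} The paper never identifies the convergents of $1/x$. It only uses the estimate $0<(-1)^n(q_n-p_n/x)<1/p_{n+1}$, i.e.\ that the integers $p_n$ are good approximation denominators for $1/x$. It restricts the $\dist$-form $\limsup_q \frac{1}{q}\log(1/\dist(q/x,\ZZ))$ to $q\in\{p_n\}$, which yields the one-sided bound $B_{\sup}(1/x)\geq B_{\sup}(x)/x$. The reverse inequality comes from applying the same argument to $1/x$.

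\textbf{Trade-offs.} Your route is more direct and avoids the symmetric two-inequality argument. Its cost is invoking the reciprocal rule for continued fractions and a case split $x>1$ versus $0<x<1$. The paper's route needs only an approximation inequality for the $p_n$ and the fact that a lim sup over a subsequence is a lower bound. It works uniformly for all $x>0$, with no case split.

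\textbf{Wording point.} Only the negation symmetry, together with the $x\leftrightarrow 1/x$ swap, is needed to reduce to $x>1$. Translation by $l$ does not preserve the identity $B_{\sup}(x)=|x|B_{\sup}(1/x)$, so it should not be cited as one of the symmetries used there.
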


\begin{proof}
  The identity for a general $GL(2,\ZZ)$ matrix is consistent with composition of M\"obius transformations, namely products of matrices: if $z=(ay+b)/(cy+d)$ and $y=(a'x+b')(c'x+d')$ then
  \beal
  B_{\sup}(z)
  & = \frac{1}{|cy+d|} B_{\sup}(y) = \frac{1}{|cy+d|\,|c'x+d'|} B_{\sup}(x)
  \\
  & = \frac{1}{|(ca'+dc')x+(cb'+dd')|} B_{\sup}(x) ,
  \ee
  which is the expected relation since the denominator here is the denominator of the M\"obius transformation from $x$ to $z$ as expected.
  It is thus enough to check the transformation for a set of generators $x\mapsto -x$, $x\mapsto x+l$ for $l\in\ZZ$, and $x\mapsto 1/x$.
  The invariance of $B_{\sup}(x)$ under the first two generators is immediate since $\dist(qx,\ZZ)$ has these invariances for $q\in\ZZ$.

  To compute $B_{\sup}(1/x)$, assume $x>0$ without loss of generality, and consider successive convergents $p_n/q_n$ and $p_{n+1}/q_{n+1}$ for $n\geq 0$.
  They surround~$x$ in the sense that $(-1)^np_n/q_n < (-1)^n x < (-1)^n p_{n+1}/q_{n+1}$.
  One has
  \be
  0 < (-1)^n (q_n - p_n/x) < (-1)^n (q_np_{n+1} - p_nq_{n+1})/p_{n+1} = 1/p_{n+1} .
  \ee
  Now we take the second expression in Definition~\ref{def:expo-type} and restrict $q$ to $\{p_n\mid n\geq 0\}$ (thus getting a lower bound of the exponential type) to get
  \be
    \begin{aligned}
      B_{\sup}(1/x)
      & \geq \limsup_{n\to+\infty} \frac{1}{p_n} \log(1/\dist(p_n/x,\ZZ))
        \geq \limsup_{n\to+\infty} \frac{\log p_{n+1}}{p_n}
      \\
      & = \limsup_{n\to+\infty} \frac{\log(q_{n+1} x+o(1))}{q_n x+o(1)}
        = \frac{1}{x} \limsup_{n\to+\infty} \frac{\log q_{n+1}}{q_n}
        = \frac{1}{x} B_{\sup}(x) .
    \end{aligned}
  \ee
  To conclude, we observe that the same inequality holds with $x\to 1/x$, namely $B_{\sup}(x) \geq x B_{\sup}(1/x)$.
\end{proof}

\subsection{An average of logarithms}

The set of fractional parts of $\pm lx$ for integer $l\geq 1$ is known to satisfy some equidistribution properties stated in Lemma~\ref{lem:equidistribution}, with outliers that are closer to zero depending on how well~$x$ is approximated by rationals.  The equidistribution and the outliers can be measured by summing a function of the fractional parts with some singularity at zero.  To test the same scale of outliers as what is seen by the exponential type $B_{\sup}(x)$, we consider the logarithm function $-\log\{lx\}\in(0,+\infty)$.  One could use a different function $\Phi(\{lx\})$ with a logarithmic singularity at zero, in which case some of the constants would be replaced by integrals $\int_0^1\Phi(y)\,dy$.  In applications this would lead to tighter lower and upper bounds on the radius  in \autoref{sec:badly-approx} and \autoref{sec:well-approx}.

\begin{proposition}\label{prop:sum-log}
  For an irrational number $x\in\RR\setminus\QQ$ the limit
  \[
    C_{\sup}(x) \coloneqq \limsup_{L\geq 1} \frac{1}{L} \sum_{l=1}^L \bigl(-\log\{lx\}-\log\{-lx\}\bigr) \in [0,+\infty]
  \]
  is comparable\footnote{It appears plausible to tighten these bounds to $2+B_{\sup}(x)\leq C_{\sup}(x) \leq 2 + 4B_{\sup}(x)$.} to the exponential type $B_{\sup}(x)$:
  \[
    B_{\sup}(x) \leq C_{\sup}(x) \leq 4 + 4B_{\sup}(x) .
  \]
  In particular, $x$ is of finite exponential type if and only if $C_{\sup}(x)$ is finite.
\end{proposition}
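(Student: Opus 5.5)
The plan is to prove the two inequalities separately, controlling everything by the continued fraction convergents $p_n/q_n$ of~$x$ and by Lemma~\ref{lem:equidistribution}.

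\textbf{Lower bound $C_{\sup}(x)\geq B_{\sup}(x)$.} Every term of the sum is non-negative, because $\{lx\},\{-lx\}\in(0,1)$. We can therefore keep only the term $l=L=q_n$ and discard the others. By~\eqref{estimate-quality-qnpn}, one of $\{q_nx\}$ and $\{-q_nx\}$ equals $\dist(q_nx,\ZZ)\leq 1/q_{n+1}$, so
\[
\frac{1}{q_n}\sum_{l=1}^{q_n}\bigl(-\log\{lx\}-\log\{-lx\}\bigr)\geq \frac{\log q_{n+1}}{q_n}.
\]
Taking the $\limsup$ along $L=q_n$ gives $C_{\sup}\geq B_1=B_{\sup}$.

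\textbf{Upper bound: reduction to a block estimate.} Fix $L$ and pick $n$ with $q_n\leq L<q_{n+1}$. I split $\intset{1}{L}$ into consecutive blocks of $q_n$ integers plus a remainder. For the remainder I recurse to smaller convergents, or simply split every block at level~$q_n$ and then the leftover at level $q_{n-1}$, and so on.

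For a block $I$ of $q_n$ consecutive integers inside $\intset{1}{q_{n+1}}$, Lemma~\ref{lem:equidistribution} with $(p,q,r,\epsilon)=(p_n,q_n,q_{n+1},(-1)^n)$ applies to $I$ and to $-I$ (the condition is that $\epsilon I$ lies in $\intset{0}{r}$ or $\intset{-r}{-1}$, which holds since $I\subset\intset{1}{q_{n+1}}$). It shows that the values $\{lx\}$ for $l\in I$ have exactly one representative in each interval $[k/q_n,(k+1)/q_n)$, and likewise for $\{-lx\}$. Hence, apart from the smallest one, the sum is bounded by
\[
\sum_{k=1}^{q_n-1}\log(q_n/k)=\log\frac{q_n^{q_n}}{(q_n-1)!}\leq q_n+O(\log q_n).
\]
The smallest value is at least $\min_{1\leq l\leq q_{n+1}}\dist(lx,\ZZ)\geq \dist(q_nx,\ZZ)>1/(2q_{n+1})$ by Lemma~\ref{lem:dist-greater}, contributing at most $\log(2q_{n+1})$. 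Each block therefore contributes at most $2q_n+2\log(2q_{n+1})+O(\log q_n)$, with the factor~$2$ accounting for $\pm$.

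\textbf{Upper bound: summing the blocks.} When $L\geq q_n$ there are at most $L/q_n$ full blocks. Their total is at most
\[
L\Bigl(2+\frac{2\log q_{n+1}}{q_n}+o(1)\Bigr)\leq L\bigl(2+2B_{\sup}+o(1)\bigr).
\]
The leftover has length $<q_n$ and I treat it with blocks of size $q_{n-1}$, then $q_{n-2}$, and so on, in a greedy Ostrowski-type decomposition. At each level $m$ the blocks lie in $\intset{1}{L}\subset\intset{1}{q_{m+1}}$? That is false for $m<n$, so this is where the difficulty lies.

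\textbf{Main obstacle.} The obstacle is the leftover blocks. For $m<n$ the lemma requires each block to sit in $\epsilon\intset{0}{q_{m+1}}$, but the leftover lies near~$L$, which is far larger. The fix I would try is to translate: for $l=l_0+j$ with $j$ in a small interval, use $\{lx\}=\{l_0x+jx\}$. The shift by $l_0x$ only rotates the points by a fixed amount, so they remain one per interval of length $1/q_m$, just with a shifted grid. Consequently each of the $2q_m$ slots (for $\pm$) costs at most $\log(q_m/k)$, except that up to two points per sign (one per grid cell adjacent to~$0$) can be close to~$0$. Those points are still $\geq\dist(jx,\ZZ)$-type quantities with $j\leq L$, bounded below by $1/(2q_{n+1})$. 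A cruder version of this bookkeeping, where each sub-block pays $2q_m+4\log(2q_{n+1})$ and the sub-block lengths sum to less than~$L$ with at most $L/q_m$ sub-blocks at level~$m$, should give the total
\[
4L+4L\max_{m}\frac{\log q_{m+1}}{q_m}+o(L).
\]
This yields $C_{\sup}\leq 4+4B_{\sup}$ after taking the $\limsup$, absorbing finitely many small~$m$ into~$o(L)$. I expect the delicate point to be ensuring that the small-$m$ sub-blocks do not pay $\log q_{n+1}$ too many times. If needed, I would bound the number of sub-blocks at level $m$ by $a_{m+1}$ and use $q_m\geq$ Fibonacci growth, so that $\sum_m a_{m+1}\log q_{n+1}=o(L)$ unless $\log q_{n+1}/q_n$ is large, in which case that term is charged to $B_{\sup}$.
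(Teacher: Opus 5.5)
Your lower bound is correct: restricting to $L=q_n$ and using the $\limsup$ of $\log q_{n+1}/q_n$ is equivalent to the paper's choice of keeping the $l=L$ term for every $L$. Your estimate for the full blocks of length $q_n$ is also correct.

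The genuine gap is the remainder $\intset{Mq_n+1}{L}$, and it comes from your decision to descend to lower convergents. The rotation trick itself is sound. However, your bookkeeping charges each level-$m$ sub-block $4\log(2q_{n+1})$, not $4\log(2q_{m+1})$. So the claimed total $4L+4L\max_m \log q_{m+1}/q_m$ does not follow. What you actually obtain is $4\log(2q_{n+1})$ times the number $\sum_{m<n} b_m$ of Ostrowski digits of the remainder, and this is not $O(L/q_n)$.

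For example, $L=q_n+(a_n-1)q_{n-1}<2q_n$ produces $a_n-1$ sub-blocks of size $q_{n-1}$. The cost per unit length is then of order $a_n\log q_{n+1}/q_n\approx \log q_{n+1}/q_{n-1}$. Choose $x$ with $q_{n+1}\approx e^{Bq_n}$ for all $n$. Then $B_{\sup}(x)=B<\infty$, while $a_n\approx e^{Bq_{n-1}}/q_{n-1}\to\infty$, so your estimate diverges. As written, the argument does not even show that finite exponential type implies $C_{\sup}(x)<\infty$. Your closing suggestion to charge the excess to $B_{\sup}$ does not address this.

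The missing observation is that no lower level is needed:
\begin{itemize}
\item Because $L<q_{n+1}$, Lemma~\ref{lem:equidistribution} with $(p,q,r,\epsilon)=(p_n,q_n,q_{n+1},(-1)^n)$ applies to every $q_n$-element subinterval of $\pm\intset{1}{L}$. So on all of $\intset{1}{L}$, the integer part $\varphi_{q_n,x}(\pm l)$ depends only on the sign and on $l\bmod q_n$.
\item The remainder is contained in $\intset{L-q_n+1}{L}$, so $\varphi_{q_n,x}(\pm l)$ is injective on it. The remainder is therefore bounded exactly like a full block.
\item At most one $l$ per sign in it has $\varphi_{q_n,x}(\pm l)=0$, and that term costs at most $\log(2q_{n+1})$.
\item The other terms contribute at most $\sum_{k=1}^{q_n-1}\log(q_n/k)=\log(q_n^{q_n}/q_n!)\leq q_n$.
\end{itemize}
Writing $S^\pm_L=\sum_{l=1}^L(-\log\{\pm lx\})$, this gives $S^\pm_L\leq(M+1)\bigl(q_n+\log(2q_{n+1})\bigr)$. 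Since $M\geq 1$, you have $(M+1)q_n\leq 2L$. Adding the two signs gives $C_{\sup}(x)\leq 4+4B_{\sup}(x)$, which is the paper's argument. The factor $4$, rather than the $2$ from your full-block estimate, comes precisely from the partial block, which in the worst case $M=1$ costs as much as the full one.

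Two harmless slips. The closed form of your block sum should be $\log(q_n^{q_n-1}/(q_n-1)!)$. The minimum of $\dist(lx,\ZZ)$ must be taken over $l<q_{n+1}$, since $l=q_{n+1}$ is closer to an integer than $q_n$.
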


\begin{proof}
The lower bound $B_{\sup}\leq C_{\sup}$ is immediate by keeping only the $l=L$ term in the sum and using that $-\log\{Lx\}-\log\{-Lx\}>\log(1/\dist(Lx,\ZZ))$.
We thus focus on the upper bound.
Denote $S^\epsilon_L\coloneqq\sum_{l=1}^L (-\log\{\epsilon lx\})$ for $\epsilon=\pm$ and $L\geq 1$.
Our strategy will be to bound these sums using the equidistribution of~$\{\pm lx\}$.

To avoid minor inconveniences later on, we restrict to $L\geq q_2$; this has no effect on the $L\to+\infty$ limit.  Note that $q_2 = a_2 a_1 + 1 \geq 2$.

Let $n\geq 2$ be such that $q_n\leq L<q_{n+1}$, and write
\begin{equation}
  L = M q_n + R , \qquad M\in\intset{1}{a_{n+1}} , \qquad R\in\intset{0}{q_n-1} ,
\end{equation}
with $R<q_{n-1}$ if $M=a_{n+1}$.
By Lemma~\ref{lem:equidistribution}, for the relevant range of values~$l$, the integer part $\varphi_{q_n,x}(\epsilon l)=\lfloor q_n\{\epsilon lx\}\rfloor$ only depends on~$\epsilon$ and the class $l\bmod q_n$.

For all values $l\in\intset{1}{L}$ with $\varphi_{q_n,x}(\epsilon l)\neq 0$, one has
\begin{equation}
  -\log\{\epsilon lx\} = \log q_n - \log\bigl(q_n\{\epsilon lx\}\bigr)
  \leq \log q_n - \log \varphi_{q_n,x}(\epsilon l) .
\end{equation}
For the remaining values, $\varphi_{q_n,x}(\epsilon l)=0$, namely $\{\epsilon lx\}\in[0,1/q_n)$.  Given that $q_n\geq q_2\geq 2$, this fractional part is equal to the distance $\dist(lx,\ZZ)$, which admits the lower bound $\dist(lx,\ZZ)\geq\dist(q_nx,\ZZ)>1/(2q_{n+1})$ from Lemma~\ref{lem:dist-greater} and~\eqref{estimate-quality-qnpn}.
Because $\varphi_{q_n,x}$ is bijective on $q_n$-element subintervals of $\pm\intset{1}{L}$ there are at most $1+\lfloor L/q_n\rfloor$ such values of~$l$.
Overall, we reach the following bound, using bijectivity of~$\varphi$ on relevant intervals to convert the sum of $\log\varphi_{q_n,x}(\epsilon l)$ to a sum of $\log j$,
\begin{equation}
  \begin{aligned}
    S^\epsilon_L
    & \leq (M + 1)\log(2q_{n+1}) + \sum_{1\leq l\leq L,\ \varphi_{q_n,x}(\epsilon l)\neq 0}\bigl(\log q_n - \log\varphi_{q_n,x}(\epsilon l)\bigr)
    \\
    & \leq (M+1) \log(2 q_{n+1}) + (M+1) \sum_{j=1}^{q_n-1} (\log q_n - \log j)
    \\
    & \leq (M+1) (\log(2 q_{n+1}) + q_n)
    \leq 2\Bigl(1+\frac{\log q_{n+1}}{q_n}+\frac{\log 2}{q_n}\Bigr) L ,
  \end{aligned}
\end{equation}
where in the next-to-last step we used that $q_n^{q_n-1}/(q_n-1)! = q_n^{q_n}/q_n!\leq e^{q_n}$ and in the last line we bounded $(M+1)q_n\leq 2L$.
By adding together the bounds on $S^+_L$ and~$S^-_L$ and passing to the lim sup, we conclude that
\begin{equation}
  C_{\sup}(x) \leq 4 + 4 B_{\sup}(x) .
  \qedhere
\end{equation}
\end{proof}

\section{Convergence for most \texorpdfstring{$b^2>0$}{b\texttwosuperior > 0}}
\label{sec:badly-approx}

\paragraph{Set-up.}

We now assume that $b^2>0$ is an irrational number that has a finite exponential type (Definition~\ref{def:expo-type}).
The lattice $\eps_1\ZZ+\eps_2\ZZ$ is dense in the line $\eps_1\RR=\eps_2\RR$.
The differences of Coulomb branch parameters $a_I-a_J$ for $1\leq I<J\leq N$, and the mass~$m$, obey the genericity assumption
\be
m \not\in \eps_2\RR , \qquad
a_I-a_J \not\in \eps_2\RR .
\ee
In this regime, we exhibit a positive radius~$R$, given in~\eqref{finite-expo-R-val} below, such that the absolute Nekrasov sum converges for $|\qinst|<R$.
We do not attempt to achieve the optimal value of~$R$.

\paragraph{Off-diagonal factors.}

As we do not care about the precise value of~$R$, we content ourselves with an easy bound for the off-diagonal factors $Z_{IJ}$, $I\neq J$ in~\eqref{Zstar-21}.
Specifically, we observe that for any $\alpha,\beta\in\CC$ with $\alpha\not\in\RR$, the curve traced by $y\mapsto 1-\beta/(\alpha+y)$ is bounded (it is in fact a circle) since $|\alpha+y|\geq|\Im\alpha|>0$ and the $y\to\pm\infty$ limit is the finite value~$1$.
We can thus define the finite number
\bel{Calphabeta-val}
C_{\alpha,\beta} \coloneqq \sup_{y\in\RR} \Bigl|1 - \frac{\beta}{\alpha+y}\Bigr|
= \frac{|\beta| + |\beta-2i\Im\alpha|}{2|\Im\alpha|}
\in (0, +\infty) ,
\ee
which satisfies $C_{\alpha,\beta}=C_{\kappa\alpha,\kappa\beta}$ for any real $\kappa\neq 0$, and is independent of $\Re\alpha$.
Then the auxiliary function $Z_{\lambda\mu}$ in~\eqref{Zstar-21} obeys
\be
|Z_{\lambda\mu}(a,m)| = \prod_{(i,j)\in\lambda} \Bigl|1-\frac{m}{a-\eps_1 (\mu'_j-i) + \eps_2 (\lambda_i-j)}\Bigr|
\leq (C_{a/\eps_2,m/\eps_2})^{|\lambda|} ,
\ee
and thus
\bel{ZIJ54}
\prod_{1\leq I\neq J\leq N} |Z_{IJ}|
\leq \Bigl( \sup_{1\leq I\leq N} \prod_{J\neq I}\prod_{\pm} C_{(a_I-a_J)/\eps_2,\pm m/\eps_2} \Bigr)^{|\vec{Y}|} .
\ee

\paragraph{A first majoration.}

We turn to diagonal factors~$Z_{II}$ in~\eqref{Zstar-21}.
Fix $1\leq I\leq N$ and denote $\lambda=Y_I$ for brevity, so
\be
Z_{II} = \!\! \prod_{(i,j)\in\lambda} \! \Bigl(1-\frac{m/\eps_2}{(1+\lambda_i-j) -b^2 (\lambda'_j-i)}\Bigr) \Bigl(1-\frac{m/\eps_2}{b^2 (1+\lambda'_j-i) - (\lambda_i-j)}\Bigr) .
\ee
Introduce a monotonically decreasing function $\Phi\colon(0,+\infty)\to(0,+\infty)$,
\be
\Phi(d) = \log\Bigl(\max_{x\in[-1/d,1/d]} |1 - x m/\eps_2|\Bigr) ,
\ee
which has the asymptotics $\Phi(d) = - \log d + \log|m/\eps_2| + O(d)$ as $d\to 0$ and $\Phi(d)\to 0$ as $d\to+\infty$.
In terms of this function,
\bel{logZII-57}
\log|Z_{II}| \leq \!\! \sum_{(i,j)\in\lambda} \! \Bigl( \Phi\bigl(|b^2(\lambda'_j-i) - (1+\lambda_i-j)|\bigr) + \Phi\bigl(|b^2 (1+\lambda'_j-i) - (\lambda_i-j)|\bigr) \Bigr) .
\ee

\paragraph{Organizing boxes by arm length.}

Let us determine which boxes have $|b^2(\lambda'_j-i) - (1+\lambda_i-j)|<1$.
For any integer $q\geq 0$, consider the boxes with $\lambda'_j-i=q$, namely boxes of the form $(\lambda'_j-q,j)$ for $1\leq j\leq\lambda_{q+1}$ (the upper bound arises from the condition $\lambda'_j-q\geq 1$).
For these boxes,
\be
\bigl|b^2(\lambda'_j-i) - (1+\lambda_i-j)\bigr|
= \bigl|b^2 q - (1+\lambda_i-j)\bigr| ,
\ee
which is equal to the fractional parts $\{b^2q\}$ and $\{-b^2q\}$ when $1+\lambda_i-j=\lfloor b^2q\rfloor$ and $\lceil b^2q\rceil$, respectively.  For other values of $1+\lambda_i-j$, it is greater than~$1$.
We denote
\beal
n_q & = \Card\bigl\{ (i,j) \bigm| \lambda'_j-i=q \text{ and } 1+\lambda_i-j = \lfloor b^2q\rfloor\bigr\} , \\
n_{-q} & = \Card\bigl\{ (i,j) \bigm| \lambda'_j-i=q \text{ and } 1+\lambda_i-j = \lceil b^2q\rceil\bigr\} .
\ee
Note that $n_0=n_{-0}=0$ since $1+\lambda_i-j\geq 1$ cannot be equal to~$0$.
All boxes counted by $n_{\pm q}$ have the form $(\lambda'_j-q,j)$ with $1\leq j\leq\lambda_{q+1}$ so we learn that $n_q+n_{-q}\leq\lambda_{q+1}$.
Thus,
\beal
\quad & \unquad \biggl( \sum_{(i,j)\in\lambda} \Phi\bigl(|b^2(\lambda'_j-i) - (1+\lambda_i-j)|\bigr) \biggr)
- |\lambda| \Phi(1)
\\
& \leq \sum_{q=1}^{\lambda'_1-1} \sum_{\pm} n_{\pm q} \Bigl( \Phi(\{\pm b^2q\}) - \Phi(1) \Bigr) ,
\ee
where $\Phi(1)$ has been subtracted from all terms for convenience, we used $\Phi(d)-\Phi(1)\leq 0$ for $d\geq 1$, and $q=0$ does not appear in the sum since $n_{\pm 0}=0$.  It proves useful to increase the upper bound to $\lambda'_1$ by noting that $n_{\pm q}=0$ for $q=\lambda'_1$ (and higher) since $n_q+n_{-q}\leq\lambda_{\lambda'_1+1}=0$.

The boxes counted by $n_q$ are all in different rows and different columns since the row or column number is determined from the other by $i=\lambda'_j-q$ and $j = 1+\lambda_i - \lfloor b^2q\rfloor$.  In addition, these maps from $i$ to $j$ and back are monotonically decreasing, so we can label the boxes $(i_1,j_1),(i_2,j_2),\dots$ with increasing values $i_1<i_2<\dots$ and decreasing $j_1>j_2>\dots$.  These are all positive integers, so we find $i_l\geq l$ and $j_l\geq n_q+1-l$ for $1\leq l\leq n_q$.
The partition~$\lambda$ thus contains the staircase partition with side length $n_q$ and with $n_q(n_q+1)/2$ boxes.\footnote{In fact, for each box $(i_l,j_l)$ we also know that $\lambda$ contains a hook with arm/leg lengths $q$ and $\lfloor b^2q\rfloor-1$, namely it contains $(i_l+q,j_l)$ and $(i_l,j_l+\lfloor b^2q\rfloor-1)$.  This leads to improved bounds on~$n_{\pm q}$, whose eventual effect on the convergence radius would be interesting to track.}
This implies $n_q \leq \sqrt{2|\lambda|}$.  Together with the same bounds on~$n_{-q}$ and the previous upper bound we get
\be
0\leq n_{\pm q} \leq \ell_{q+1} \coloneqq \min\bigl( \lambda_{q+1}, \sqrt{2|\lambda|} \bigr) .
\ee

The same logic applies to the second term in~\eqref{logZII-57}.  The integers defined for $q\geq 0$ by
\beal
n'_q & = \Card\bigl\{ (i,j) \bigm| 1+\lambda'_j-i=q \text{ and } \lambda_i-j = \lfloor b^2q\rfloor\bigr\} , \\
n'_{-q} & = \Card\bigl\{ (i,j) \bigm| 1+\lambda'_j-i=q \text{ and } \lambda_i-j = \lceil b^2q\rceil\bigr\}
\ee
obey $n'_0=n'_{-0}=0$ and
\be
0\leq n'_{\pm q} \leq \ell_q = \min\bigl(\lambda_q, \sqrt{2|\lambda|}\bigr) .
\ee
We reach
\beall{logZ-510}
\quad & \unquad \log|Z_{II}| - 2|\lambda| \Phi(1)
\leq \sum_{q=1}^{\lambda'_1} \sum_{\pm}
(n'_{\pm q} + n_{\pm q}) \Bigl( \Phi(\{\pm b^2q\}) - \Phi(1) \Bigr) .
\\
 & \leq \sum_{q=1}^{\lambda'_1} \sum_{\pm} \bigl( \ell_q + \ell_{q+1} \bigr)
 \Bigl( \Phi(\{\pm b^2q\}) - \Phi(1) \Bigr) .
\ee

\paragraph{Bounding the partial sums.}

Since the coefficients $\ell_q+\ell_{q+1}$ in~\eqref{logZ-510} are non-increasing with~$q$, the sum can usefully be rewritten in terms of
\be
S_q = \sum_{p=1}^q \sum_{\pm} \Bigl(\Phi(\{\pm pb^2\}) - \Phi(1)\Bigr) .
\ee
We now determine an upper bound for $S_q$.
Note that
\bel{Phi1-val}
\Phi(1) = \log\bigl(\max(|m/\eps_2-1|, |m/\eps_2+1|)\bigr) .
\ee
Thanks to the asymptotics near zero, $\Phi(d)+\log d$ is bounded for $0<d\leq 1$, and specifically
\beall{Cmeps2-val}
C_{m/\eps_2} \, & \! \coloneqq \sup_{d\in(0,1]} \bigl(\Phi(d)-\Phi(1)+\log d \bigr)
= \log\Bigl(\max_{d\in[0,1],y\in[-1,1]} |d - y m/\eps_2|\Bigr)-\Phi(1)
\\
& = \log\bigl(\max(|m/\eps_2|,|m/\eps_2-1|, |m/\eps_2+1|)\bigr)-\Phi(1)
\\
& = \max(0,\log|m/\eps_2|-\Phi(1)).
\ee
Thus,
\be
0 \leq \Phi(d) - \Phi(1) \leq - \log d + C_{m/\eps_2} , \qquad d\in(0,1] .
\ee
We deduce
\bel{Sqbound}
S_q \leq q \biggl( 2 C_{m/\eps_2} + \frac{1}{q} \sum_{p=1}^q \bigl(-\log\{pb^2\}-\log\{-pb^2\}\bigr) \biggr) .
\ee
Recall now the definition of $C_{\sup}(b^2)$ in Proposition~\ref{prop:sum-log} as a lim sup of the average of $-\log\{pb^2\}-\log\{-pb^2\}$.  This gives
\bel{Sqbound-limsup}
\limsup_{q\to+\infty} \frac{1}{q} S_q \leq 2 C_{m/\eps_2} + C_{\sup}(b^2) .
\ee
An immediate consequence is that for any $\delta>0$, the difference $S_q-q(2C_{m/\eps_2}+C_{\sup}(b^2)+\delta)$ is negative for large enough~$q$, hence has a maximum $K_{\delta,\eps_1,\eps_2,m}\in\RR$ for $q\geq 1$.  For any $\delta>0$ and $q\geq 1$,
\bel{Sqbound-delta}
S_q \leq q (2 C_{m/\eps_2} + C_{\sup}(b^2) + \delta) + K_{\delta,\eps_1,\eps_2,m} .
\ee
None of the objects in this inequality depend on the partition~$\lambda$ nor the Coulomb branch parameters~$a$.

\paragraph{Summing the partial sums.}

We express~\eqref{logZ-510} in terms of~$S_q$ and use the bound~\eqref{Sqbound} and~\eqref{Sqbound-delta} to get, for any $\delta>0$,
\beal
& \log|Z_{II}| - 2|\lambda| \Phi(1)
\leq \sum_{q=1}^{\lambda'_1} (\ell_q - \ell_{q+2}) S_q
\\
& \leq (2 C_{m/\eps_2} + C_{\sup}(b^2) + \delta) \sum_{q=1}^{\lambda'_1} q(\ell_q-\ell_{q+2}) + K_{\delta,\eps_1,\eps_2,m} \sum_{q=1}^{\lambda'_1} (\ell_q-\ell_{q+2}) .
\ee
Telescoping both sums yields a simpler expression (simplified further by doubling a term~$\ell_1$), which we then bound by replacing $\ell_q=\min(\sqrt{2|\lambda|},\lambda_q)$ by $\sqrt{2|\lambda|}$ or $\lambda_q$ depending on the term,
\beall{ZII519}
\log|Z_{II}|
& \leq 2|\lambda| \Phi(1) + 2 \bigl(2 C_{m/\eps_2} + C_{\sup}(b^2) + \delta\bigr) \sum_{q=1}^{\lambda'_1} \ell_q + K_{\delta,\eps_1,\eps_2,m} (\ell_1 + \ell_2)
\\
& \leq 2\bigl(\Phi(1) + 2 C_{m/\eps_2} + C_{\sup}(b^2) + \delta\bigr) |\lambda| + 2 K_{\delta,\eps_1,\eps_2,m} \sqrt{2|\lambda|} .
\ee

\paragraph{Collecting off-diagonal and diagonal factors.}

We combine~\eqref{ZIJ54} and~\eqref{ZII519} to get
\beall{logZY524}
\log(|Z_{\vec{Y}}|)
& \leq \sum_{1\leq I,J\leq N} \log|Z_{IJ}|
\\
& \leq |\vec{Y}| \log\Bigl( \sup_{1\leq I\leq N} \prod_{J\neq I}\prod_{\pm} C_{(a_I-a_J)/\eps_2,\pm m/\eps_2} \Bigr) \\
& \quad + 2|\vec{Y}|\bigl(\Phi(1) + 2 C_{m/\eps_2} + C_{\sup}(b^2) + \delta\bigr) + O(|\vec{Y}|^{1/2})
\ee
for all $\vec{Y}$, for any given $\delta>0$.
Since this holds for all $\delta>0$, the lim inf~\eqref{Rdef-Z} that gives $\Rabs$ is bounded below and one gets
\beall{finite-expo-R-val}
\Rabs
& \geq A(m,a;\eps_1,\eps_2) e^{-2C_{\sup}(b^2)} ,
\\
A(m,a;\eps_1,\eps_2) \, & \! \coloneqq e^{-2\Phi(1) - 4 C_{m/\eps_2}} \inf_{1\leq I\leq N} \prod_{J\neq I}\prod_{\pm} \bigl( C_{(a_I-a_J)/\eps_2,\pm m/\eps_2} \bigr)^{-1} ,
\ee
where the factors in $A(m,a;\eps_1,\eps_2)$ are defined in \eqref{Calphabeta-val}, \eqref{Phi1-val}, \eqref{Cmeps2-val}, and all are continuous and piecewise smooth, in contrast to the densely discontinuous $C_{\sup}(b^2)$.

\paragraph{Some remarks.}

By Proposition~\ref{prop:sum-log}, the bound can be further weakened to
\be
\Rabs \geq A(m,a;\eps_1,\eps_2) e^{-8-8B_{\sup}(b^2)} .
\ee
The same bound with $b\to 1/b$ also holds.
Under this transformation, the $C_{a/\eps_2,m/\eps_2}$ factors in $A(m,a;\eps_1,\eps_2)$ are invariant, but the exponential prefactor is not, and most importantly $B_{\sup}(b^2)$ is mapped to $B_{\sup}(1/b^2) = B_{\sup}(b^2)/b^2$ by Proposition~\ref{prop:B-inverse}.
Thus, up to redefining $A$ to a new function~$A_1$ to absorb some factors, one gets the lower bound in~\eqref{Rabs-lower-bound-thm2star}.

Details of the function $A(m,a;\eps_1,\eps_2)$ are artifacts of loose bounds along the way.  However, some dependence on Coulomb branch parameters appears to be necessary: for $\vec{Y}=(\lambda,\lambda,\dots,\lambda)$ with $\lambda$ the partition defined for some~$L$ by
\be
\lambda_i = 1 + \bigfloor{b^2 (L-i)} , \qquad i = 1,\dots,L ,
\ee
namely $\lambda$ close to a triangle with (irrational) slope~$b^2$, one can check that for any box $(i,j)\in\lambda$, its box content lies in a small segment of the line $\eps_2\RR$ since
\beal
\quad & \unquad \bigl( -(\lambda'_j-i)\eps_1+(\lambda_i-j)\eps_2 \bigr) / \eps_2
\\
& = - \bigl(L-i - \bigceil{b^{-2} (j - 1)}\bigr) b^2
+ \bigfloor{b^2 (L-i)} + 1 - j
\\
& = - \bigl\{ b^2 (L-i) \bigr\} + b^2 \bigl\{ b^{-2} (1-j) \bigr\}
\\
& \in (-1,b^2) .
\ee
All off-diagonal factors are then close to $C_{a/\eps_2,m/\eps_2}$, and contribute non-trivially to the leading term in $\frac{1}{|\vec{Y}|} \log|Z_{\vec{Y}}|$ in a way similar to how they appear in~\eqref{logZY524}.

\section{Radius upper bounds for \texorpdfstring{$b^2>0$}{b\texttwosuperior > 0}}
\label{sec:divergence}

This section is devoted to divergence properties of the absolute sum.
In the rational $b^2>0$ case, \autoref{sec:divergence-rational} determines that the Nekrasov sum is simply ill-defined due to some singular terms.
For other values of~$b^2$, \autoref{sec:Rabs_le_1} proved that $\Rabs\leq 1$ under our genericity assumptions.
\autoref{sec:well-approx} provides better upper bounds for irrational $b^2>0$, depending on its exponential type $B_{\sup}(b^2)$.
This culminates in a proof that $\Rabs=0$ for super-exponentially well approximable positive irrational~$b^2$, namely when $B_{\sup}(b^2)=+\infty$.

\subsection{Rational case: ill-defined terms}
\label{sec:divergence-rational}

We consider the positive rational case $b^2=p/q$ where $p,q>0$ are coprime integers.
Equivalently, we assume that the lattice $\eps_1\ZZ+\eps_2\ZZ=\veps\ZZ$ for some~$\veps$, so that
\be
\eps_1 = p\veps , \qquad \eps_2 = q\veps , \qquad \veps = \gcd(\eps_1,\eps_2) .
\ee
Under the genericity assumption
\be
m\not\in \eps_1\ZZ+\eps_2\ZZ = \veps\ZZ
\ee
(the genericity assumption on Coulomb branch parameters is unnecessary here),
we exhibit terms in the sum over~$\vec{Y}$ that have poles as $b^2\to p/q$.  This implies that the sum over tuples is \emph{ill-defined}\footnote{Note that after cancellations between contributions of different tuples with $|\vec{Y}|=k$ the partition function itself may just have a removable singularity at this value of~$b^2$, but the techniques used in this paper are not suited to answering that.} for that value of~$b^2$.

Consider the term $\vec{Y}=(\lambda,\emptyset,\dots,\emptyset)$, so that the expression~\eqref{Zstar-21} simplifies as follows (after mapping $j\to\lambda_i+1-j$ in the product defining~$Z_{1J}$ for $J\neq 1$)
\begin{align}\label{Zlambda00}
& Z_{\vec{Y}=(\lambda,\emptyset,\dots)} = \prod_{J=1}^N Z_{1J} ,
\\
\nonumber
& Z_{1J} \overset{\mathclap{J\neq 1}}{=} \prod_{(i,j)\in\lambda} \frac{a_1-a_J+i\eps_1 + j\eps_2-m}{a_1-a_J+i\eps_1 + j\eps_2} \, \frac{a_1-a_J+(i-1)\eps_1 + (j-1)\eps_2+m}{a_1-a_J+(i-1)\eps_1 + (j-1)\eps_2} ,
\\
\nonumber
& Z_{11} = \!\! \prod_{(i,j)\in\lambda} \!\! \frac{-\eps_1 (\lambda'_j-i) + \eps_2 (1+\lambda_i-j)-m}{-\eps_1 (\lambda'_j-i) + \eps_2 (1+\lambda_i-j)} \,
\frac{\eps_1 (1+\lambda'_j-i) - \eps_2 (\lambda_i-j)-m}{\eps_1 (1+\lambda'_j-i) - \eps_2 (\lambda_i-j)} .
\end{align}

For any partition $\lambda$ with $\lambda_1=p$ and $\lambda'_1=q+1$, the box $(i,j)=(1,1)$ contributes the factor $1/(p\eps_2-q\eps_1)$, which has a pole at $b^2=p/q$.
Generically, this is enough to ensure that $Z_{\vec{Y}}$ blows up as $b^2\to p/q$, but for fine-tuned Coulomb branch or mass parameters this singularity can be cancelled by factors in the numerator.

We have assumed the mass to be generic, that is, $m\not\in \eps_1\ZZ+\eps_2\ZZ$, such that the numerator of~$Z_{11}$ remains non-zero.\footnote{In \autoref{app:comment-pole-cancel}, we explore some special values of~$m$ where poles might be avoided.} Regarding the numerator of $Z_{1J}$ for $J\neq 1$, observe that $i\eps_1+j\eps_2=a_J-a_1+m$ or $a_J-a_1-m+\eps_1+\eps_2$ has finitely many solutions $(i,j)$ since $b^2=\eps_1/\eps_2>0$, so even for fine-tuned Coulomb branch parameters, only a bounded number of numerator factors may vanish as $b^2\to p/q$.
One can get arbitrarily many denominator factors to vanish by the following choice of partition~$\lambda$.  This ensures that $Z_{\vec{Y}}$ blows up as $b^2\to p/q$.

For $p>1$ one can take a staircase partition defined by a parameter $K\geq 1$ in addition to $(p,q)$, with
\be
\lambda_{kq+1} = \dots = \lambda_{kq+q} = (K+1-k)(p-1) , \qquad 0\leq k\leq K .
\ee
For this partition, the boxes $(i,j)=(kq,(K+1-k)(p-1))$ for $1\leq k\leq K$ have
\beal
\lambda_i = (K+2-k)(p-1) , \quad
\lambda'_j = (k+1)q ,
\\
-\eps_1 (\lambda'_j-i) + \eps_2 (1+\lambda_i-j)
= -\eps_1 q + \eps_2 p = 0 .
\ee
This term~$\vec{Y}$ thus features $K$~vanishing denominator factors, and $K$~can be made arbitrarily large.
In the special case $p=1$, the same construction works with $(p,q)$ replaced by $(2p,2q)$ or any higher multiple thereof.

\subsection{Bound by the exponential type}
\label{sec:well-approx}

\paragraph{Set-up.}

We now assume that $b^2$ is a positive irrational number and that the mass obeys the genericity assumption
\be
m\not\in\overline{\eps_1\ZZ+\eps_2\ZZ} = \eps_2\RR .
\ee
(The genericity assumption on Coulomb branch parameters is unnecessary here.)
We show an upper bound on the absolute convergence radius~$\Rabs$:
\be
\Rabs \leq e^{-B_{\sup}(b^2)/(1+b^2)} A_2(m,a;\eps_1,\eps_2)
\ee
that depends on the exponential type $B_{\sup}(b^2)$ of~$b^2$ defined in Definition~\ref{def:expo-type}, and on a continuous function $A_2(m,a;\eps_1,\eps_2)$ of the parameters.
Observe that $B_{\sup}(b^2)/(1+b^2)$ is invariant under $b\to 1/b$ by Proposition~\ref{prop:B-inverse}.
For super-exponentially well approximable $b^2>0$, namely numbers with $B_{\sup}(b^2)=+\infty$, one gets $\Rabs=0$, which means that the sum over partitions does not converge absolutely for any~$\qinst$.

The upper bound on~$\Rabs$ is obtained by exhibiting a (sub)sequence of tuples~$\vec{Y}$ and lower bounds on $\log|Z_{\vec{Y}}|$ that are linear in~$|\vec{Y}|$, of the form
\be
\log|Z_{\vec{Y}}| \geq \Bigl( \frac{B}{1+b^2} - \log A_2 + o(1) \Bigr) |\vec{Y}| ,
\qquad |\vec{Y}| \to +\infty .
\ee

\paragraph{The choice of term.}

As in the rational $b^2>0$ case, we shall consider terms for which $\vec{Y}$ has a single non-empty component.
Before we start, it is convenient to reorder without loss of generality the Coulomb branch parameters~$a_I$ such that $\alpha_1\leq\alpha_2\leq\dots\leq\alpha_N$, with
\bel{alphaI-def}
\alpha_I \coloneqq \frac{\Re\eta}{\Im\eta} \Im\Bigl(\frac{a_I}{\eps_2}\Bigr) - \Re\Bigl(\frac{a_I}{\eps_2}\Bigr) , \qquad
\eta \coloneqq \frac{2m-\eps_1-\eps_2}{2\eps_2} ,
\ee
where $\Im\eta\neq 0$ thanks to the genericity of~$m$.
This ordering prevents some numerators from vanishing in upcoming expressions.

We also parametrize $m/\eps_2$ and $\eta$ in polar coordinates $\rho_m,\rho_\eta>0$ and $\theta_m,\theta_\eta\in(0,\pi)\cup(\pi,2\pi)$,
\be
m/\eps_2 = \rho_m e^{i\theta_m} , \qquad
\eta = \rho_\eta e^{i\theta_\eta} .
\ee

\paragraph{A small denominator.}

Consider the term with $\vec{Y}=(\lambda,\emptyset,\dots,\emptyset)$, whose contribution was already given in~\eqref{Zlambda00}.
In particular,
\bel{Z11-expr}
Z_{11}
= \!\! \prod_{(i,j)\in\lambda} \!\! \Bigl(1 - \frac{m/\eps_2}{-b^2 (\lambda'_j-i) + (1+\lambda_i-j)} \Bigr)
\Bigl( 1 - \frac{m/\eps_2}{b^2 (1+\lambda'_j-i) - (\lambda_i-j)} \Bigr) .
\ee
Let us make one of the factors large.
For some $q\geq 1$ to be chosen later, let $p=\lfloor qb^2+1/2\rfloor$ be the integer closest to $qb^2$, and select $\lambda$ to be the smallest partition with $\lambda_1=p$ and $\lambda'_1=q+1$, namely $\lambda=(p,1^q)$.
The first factor in~\eqref{Z11-expr} for $(i,j)=(1,1)$ is then
\be
1 - \frac{m/\eps_2}{-b^2 (\lambda'_j-i) + (1+\lambda_i-j)} \Bigr|_{i=j=1}
= 1 \pm \frac{m/\eps_2}{\dist(q b^2,\ZZ)} .
\ee
If $qb^2$ is close to an integer, this becomes large.

\paragraph{Other diagonal factors.}

The remaining $2|\lambda|-1=2p+2q-1$ factors in~$Z_{11}$ are bounded below as follows.
In terms of $m/\eps_2=\rho e^{i\theta}$, the genericity assumption means that $1$~is at a finite distance from the line $e^{i\theta}\RR$, which is explicited in the lower bound
\be
\bigl| 1 - y(m/\eps_2) \bigr|
= \bigl| e^{-i\theta} - \rho y\bigr|
\geq |\Im e^{-i\theta}| = \abs{\sin\theta} > 0 ,
\qquad y\in\RR .
\ee
Thus~\eqref{Z11-expr} is bounded below by
\be
|Z_{11}|
\geq \abs{\sin\theta}^{2p+2q-1} \Bigl|1 \pm \frac{m/\eps_2}{\dist(q b^2,\ZZ)}\Bigr| .
\ee

\paragraph{Off-diagonal factors.}

Next, we consider $Z_{1J}$ for $J\neq 1$, explicited in~\eqref{Zlambda00}.
Recall $\eta = m/\eps_2 - (1+b^2)/2$ and the definition of~$\alpha_I$ in~\eqref{alphaI-def}.
The two numerator factors in~$Z_{1J}$ for $J\neq 1$, divided by~$\eps_2$ for convenience, take the form
\beall{a1eps2-aJeps2}
\frac{a_1}{\eps_2} - \frac{a_J}{\eps_2} \pm \eta + \Bigl(i-\frac{1}{2}\Bigr)b^2 + \Bigl(j-\frac{1}{2}\Bigr)
& \in \frac{a_1}{\eps_2} - \frac{a_J}{\eps_2} + \eta \RR + \Bigl[\frac{1+b^2}{2},+\infty\Bigr) \\
& = \eta \RR + \Bigl[\alpha_J-\alpha_1+\frac{1+b^2}{2},+\infty\Bigr) ,
\ee
where the last step relies on $a_I/\eps_2 = - \alpha_I + (\Im(a_I/\eps_2)/\Im\eta) \eta\in-\alpha_I+\RR\eta$.
Thanks to the ordering, $\alpha_J-\alpha_1+(1+b^2)/2>0$, so the last set in~\eqref{a1eps2-aJeps2} does not contain~$0$.
In terms of $\eta=\rho_\eta e^{i\theta_\eta}$, one has the lower bound
\be
\min_{z\in \eta\RR+[\alpha_J-\alpha_1+(1+b^2)/2,+\infty)} |z|
= \Bigl(\alpha_J-\alpha_1+\frac{1+b^2}{2}\Bigr)\abs{\sin\theta_\eta} ,
\ee
which leads to a lower bound on~$Z_{1J}$, conveniently expressed as (for $J\neq 1$)
\beal
|Z_{1J}| & = \prod_{(i,j)\in\lambda} \prod_{\pm} \biggl|1 + \frac{m/\eps_2}{a_1/\eps_2-a_J/\eps_2\pm\eta+(i-1/2)b^2+(j-1/2)} \biggr|^{-1} \\
& \geq \biggl(1 + \frac{|m/\eps_2|}{(\alpha_J-\alpha_1+(1+b^2)/2)\abs{\sin\theta_\eta}} \biggr)^{-2(p+q)} .
\ee

\paragraph{Concluding.}

By~\eqref{Rdef-Z}, the absolute convergence radius is a lim inf over all tuples, hence is bounded above by the lim inf over tuples $\vec{Y}=(\lambda,\emptyset,\dots,\emptyset)$ with $\lambda=(p,1^q)$ that we are considering now.  Thus,
\beal
\Rabs & \leq \liminf_{q\to+\infty} |Z_{((p,1^q),\emptyset,\dots)}|^{-1/(p+q)}
= \liminf_{q\to+\infty} \Bigl( |Z_{11}|^{-1/(p+q)} \prod_{J\neq 1} |Z_{1J}|^{-1/(p+q)} \Bigr)
\\
& \leq A_2 \liminf_{q\to+\infty} \Bigl|1 \pm \frac{m/\eps_2}{\dist(q b^2,\ZZ)}\Bigr|^{-1/(p+q)} .
\ee
Using that $q(1+b^2)\in [p+q-1/2, p+q+1/2]$ by construction of~$p$, we note that $p+q=q(1+b^2)+O(1)$ as $q\to+\infty$. The effects of additive and multiplicative constants $1$ and $m/\eps_2$ drop out when raised to a power $-1/(p+q)\to 0$, so
\be
\Rabs \leq A_2 \Bigl( \liminf_{q\to+\infty} \Bigl( \dist(qb^2,\ZZ)^{1/q} \Bigr) \Bigr)^{1/(1+b^2)}
= A_2 e^{-B_{\sup}(b^2)/(1+b^2)} .
\ee

\section*{Acknowledgements}

The author thanks Sylvain Lacroix for stimulating discussions, as well as Fabrizio Del Monte, Harini Desiraju, Alba Grassi, and Vincent Vargas for organizing the ``Conformal field theory 3 ways: integrable, probabilistic, and supersymmetric'' SwissMAP conference in January 2024, which reignited his interest into these topics.

\appendix

\section{Instanton partition functions}
\label{sec:instanton-telescopic}

The $k$-instanton contribution to the Nekrasov partition function of the $U(N)$ theories of interest to us takes the form of a sum of residues of a matrix model~\cite{hep-th/0206161}.  Each term is a finite product/ratio of $O(k^2+kN)$ factors that are linear in the various parameters $m,a_I,\eps_1,\eps_2$.
Here, we briefly explain some well-known rewritings of these products in which one cancels a multitude of factors that are equal in the numerator and denominator, and reduces the products to have only $O(kN)$ factors.
For clarity, and reuse in later exploration of 4d $\Nsusy=2$ SQCD and 5d $\Nsusy=1$ instanton partition functions, we express the main simplification steps in terms of an abstract function~$f$, specialized to the 4d $\Nsusy=2^*$ theory afterwards.

\paragraph{Telescopic products from matrix model residue sums.}

An important building block of instanton partition functions is the product
\beall{telescope-1}
Z_{\vec{Y}}[f] & = \biggl(
\sideset{}'\prod_{1\leq\alpha\leq k,1\leq J\leq N} \frac{1}{f(\phi_\alpha-a_J+\eps_1+\eps_2)}
\sideset{}'\prod_{1\leq\beta\leq k,1\leq I\leq N} \frac{1}{f(a_I-\phi_\beta)} \\
& \qquad \sideset{}'\prod_{1\leq \alpha,\beta\leq k} \frac{f(\phi_\alpha-\phi_\beta)f(\phi_\alpha-\phi_\beta+\eps_1+\eps_2)}{f(\phi_\alpha-\phi_\beta+\eps_1)f(\phi_\alpha-\phi_\beta+\eps_2)}
\biggr)_{\phi=\phi(\vec{Y})}
\ee
for some function~$f$ with $f(0)=0$, where $\phi=\phi(\vec{Y})$ is given (with an arbitrary order of the entries~$\phi_\alpha$) by
\bel{phialpha}
\{\phi_\alpha\} = \bigl\{a_I+(i-1)\eps_1+(j-1)\eps_2\bigm| 1\leq I\leq N , (i,j)\in Y_I\bigr\} ,
\ee
and the primes denote the omission of all factors $f(0)=0$.
Alternatively, one can shift all $f(x)$ to $f(\delta+x)$ for some small $\delta$, and consider the leading term as $\delta\to 0$.
Since this appendix is only meant to outline the rewriting procedure, we will omit the primes in the products below.

Collecting together the factors of the form $f(a_I-a_J+p\eps_1+q\eps_2)\eqqcolon f_{IJ}(p,q)$ for fixed $I,J$, we find
\beal
Z_{\vec{Y}}[f] & = \prod_{1\leq I,J\leq N} Z_{Y_I Y_J}[f_{IJ}] ,
\\
Z_{\lambda\mu}[g]
\, & \! \coloneqq \prod_{(i,j)\in\lambda} \frac{1}{g(i,j)}
\prod_{(s,t)\in\mu} \frac{1}{g(1-s,1-t)} \\
& \prod_{(i,j)\in\lambda,(s,t)\in\mu} \frac{g(i-s,j-t)g(1+i-s,1+j-t)}{g(1+i-s,j-t)g(i-s,1+j-t)} .
\ee
Observe the symmetry $Z_{\lambda\mu}[g]=Z_{\mu\lambda}[\tilde{g}]$ where $\tilde{g}(i,j)=g(1-i,1-j)$.
We separate $Z_{\lambda\mu}[g]$ into factors $g(p,q)$ with $q\geq 1$ and those with $q\leq 0$, which are related to the first by symmetry,
\beal
Z_{\lambda\mu}[g] & = Z^>_{\lambda\mu}[g] Z^\leq_{\lambda\mu}[g] ,
\qquad\qquad Z^\leq_{\lambda\mu}[g] = Z^>_{\mu\lambda}[\tilde{g}] ,
\\
Z^>_{\lambda\mu}[g] & = \! \prod_{(i,j)\in\lambda} \biggl( \frac{1}{g(i,j)}
\prod_{\substack{1\leq t<j\\1\leq s\leq\mu'_t}} \frac{g(i-s,j-t)}{g(1+i-s,j-t)}
\prod_{\substack{1\leq t\leq j\\1\leq s\leq\mu'_t}} \frac{g(1+i-s,1+j-t)}{g(i-s,1+j-t)} \biggr)
\\[-6pt]
\ee
with the convention that $\mu'_t=0$ for $t>\mu_1$.

We now evaluate $Z^>_{\lambda\mu}[g]$.
We change the first factor $g(i,j)$ to $g(i,1+\lambda_i-j)$ by using that the interval $1\leq j\leq\lambda_i$ is invariant under this change.
We also perform the telescopic products over~$s$,
\beal
& Z^>_{\lambda\mu}[g] = \!\! \prod_{(i,j)\in\lambda} \biggl( \frac{1}{g(i,1+\lambda_i-j)}
\prod_{1\leq t<j} \!\! \frac{g(i-\mu'_t,j-t)}{g(i,j-t)}
\prod_{1\leq t\leq j} \frac{g(i,1+j-t)}{g(i-\mu'_t,1+j-t)} \biggr)
\\
& = \!\! \prod_{(i,j)\in\lambda} \frac{1}{g(i,1+\lambda_i-j)}
\prod_{(i,t)\in\lambda} \!\! \biggl( \prod_{t<j\leq\lambda_i} \!\! \frac{g(i-\mu'_t,j-t)}{g(i,j-t)} \!
\prod_{t\leq j\leq\lambda_i} \frac{g(i,1+j-t)}{g(i-\mu'_t,1+j-t)} \biggr)
\ee
where in the second line we have simply reordered the products over $i,j,t$ into products over $i,t,j$ with suitable bounds, observing that $(i,j)\in\lambda$ and $1\leq t\leq j$ implies $(i,t)\in\lambda$.
The last two products cancel, apart from the $j=\lambda_i$ term in the last product.
This ratio $g(i,1+\lambda_i-t)/g(i-\mu'_t,1+\lambda_i-t)$ finally combines with the first factor and yields
\be
Z^>_{\lambda\mu}[g] = \prod_{(i,j)\in\lambda} \frac{1}{g(i-\mu'_j,1+\lambda_i-j)} .
\ee
Altogether
\be
Z_{\lambda\mu}[g]
= \prod_{(i,j)\in\lambda} \frac{1}{g(i-\mu'_j,1+\lambda_i-j)}
\prod_{(i,j)\in\mu} \frac{1}{g(1+\lambda'_j-i,j-\mu_i)} .
\ee

\paragraph{Telescopic products for \texorpdfstring{$\Nsusy=2^*$}{N=2*}.}

The $k$-instanton contribution to the Nekrasov partition function takes the form of a sum of residues of a matrix model~\cite{hep-th/0206161},
\beall{Zkn2star}
Z_k & = \sum_{|\vec{Y}|=k} \biggl(
\sideset{}'\prod_{1\leq\alpha\leq k,1\leq J\leq N} \Bigl(1 - \frac{m}{\phi_\alpha-a_J+\eps_1+\eps_2}\Bigr)
\sideset{}'\prod_{1\leq\beta\leq k,1\leq I\leq N} \Bigl(1 - \frac{m}{a_I-\phi_\beta}\Bigr) \\
& \qquad \sideset{}'\prod_{1\leq \alpha,\beta\leq k}
\Bigl( \frac{1-m/(\phi_\alpha-\phi_\beta+\eps_1)}{1-m/(\phi_\alpha-\phi_\beta)} \Bigr)
\Bigl( \frac{1-m/(\phi_\alpha-\phi_\beta+\eps_2)}{1-m/(\phi_\alpha-\phi_\beta+\eps_1+\eps_2)} \Bigr)
\biggr)_{\phi=\phi(\vec{Y})} ,
\ee
with the same notation $\prod'$ and $\phi=\phi(\vec{Y})$ as in~\eqref{phialpha}.
Specifically, the summand in~\eqref{Zkn2star} coincides with the general form~\eqref{telescope-1} for $f(x) = x / (x-m)$, namely $1/f(x)=1-m/x$.
Thus, the sum $\Zinststar = \sum_{k\geq 0} \qinst^k Z_k$ is
\beall{tele-2star-final}
\Zinststar \! & = \sum_{\vec{Y}} \qinst^{|\vec{Y}|} \!\! \prod_{1\leq I,J\leq N} \biggl(
\prod_{(i,j)\in Y_I} \Bigl( 1 - \frac{m}{a_I-a_J+(i-Y'_{Jj})\eps_1+(1+Y_{Ii}-j)\eps_2} \Bigr) \\
& \qquad\qquad\qquad \prod_{(i,j)\in Y_J} \Bigl( 1 - \frac{m}{a_I-a_J+(1+Y'_{Ij}-i)\eps_1+(j-Y_{Ji})\eps_2} \Bigr) \biggr) .
\ee
To get the form~\eqref{Zstar-21} used in the main text, we further swap $I,J$ in the second set of factors.

\section{Absolute convergence radius}
\label{sec:absolute-convergence-radius}

Theorem~\ref{thm:2star} involves the notion of absolute convergence radius, and in the proof we rely on a formula~\eqref{Rdef-Z} as a limit inferior in terms of the coefficients of the series.  This appendix explores the notion and proves the formula.
This will allow us to eliminate unimportant factors and shifts in the main text.  The only subtlety compared to a standard complex analysis course is that for each power~$k$ there is a sum over tuples of partitions with $|\vec{Y}|=k$, whose number of terms thankfully grows subexponentially.

\begin{definition}\label{def:convrad}
  Given a collection of coefficients $A_{\vec{Y}}\in\CC$ labeled by $N$-tuples of partitions $\vec{Y}=(Y_1,\dots,Y_N)$, the \emph{absolute convergence radius}
  \bel{Rdef}
  R_{|A|} = \liminf_{\vec{Y}} |A_{\vec{Y}}|^{-1/|\vec{Y}|} \in [0,+\infty]
  \ee
  is the convergence radius of the series
  \bel{qAseries}
  \sum_{k\geq 0} q^k \sum_{|\vec{Y}|=k} \bigl|A_{\vec{Y}}\bigr| .
  \ee
\end{definition}

\begin{proof}[Proof of the formula for $R_{|A|}$]
  The limit inferior is the smallest accumulation point of the infinite set $|A_{\vec{Y}}|^{-1/|\vec{Y}|}$.
  This has two consequences.
  Firstly, for any (real) $q>R_{|A|}$, there are infinitely many $\vec{Y}$ with $|A_{\vec{Y}}|^{-1/|\vec{Y}|}<q$, hence infinitely many terms greater than~$1$ in the series~\eqref{qAseries}, which diverges as a consequence.
  Secondly, for any $0\leq r<R_{|A|}$ there are finitely many $\vec{Y}$ for which $|A_{\vec{Y}}|\geq r^{-|\vec{Y}|}$, hence there exists a constant $C>0$ such that $|A_{\vec{Y}}|\leq Cr^{-|\vec{Y}|}$.  Then, for any $q$ with $|q|<r$, one has
  \beal
  \biggl| \sum_{k\geq 0} q^k \sum_{|\vec{Y}|=k} \bigl|A_{\vec{Y}}\bigr| \biggr|
  & \leq C \sum_{k\geq 0} |q/r|^k \sum_{|\vec{Y}|=k} 1
  \\
  &
  = C \prod_{I=1}^N \biggl( \sum_{Y_I} |q/r|^{|Y_I|} \biggr)
  = C \biggl(\prod_{j\geq 1} (1 - |q/r|^j)^{-1}\biggr)^N ,
  \ee
  where the last equality is the standard expression of the generating series of partition numbers, which has radius of convergence~$1$.
\end{proof}

\section{Exceptional pole cancellation}
\label{app:comment-pole-cancel}

This appendix prolongs the investigation of positive rational $b^2=p/q>0$ with $p,q$ positive coprime integers.
For generic masses, \autoref{sec:divergence-rational} exhibited terms that have poles as $b^2\to p/q$.
For non-generic masses the pole may be cancelled, depending on precisely what partition~$\lambda$ is considered.
Consider first $\lambda$ to be the smallest partition with $\lambda_1=p$ and $\lambda'_1=q+1$, namely $\lambda=(p,1^q)$.  Then the numerator factors vanish precisely when the mass belongs to a finite subset of the lattice $\eps_1\ZZ+\eps_2\ZZ$,
\beall{mset-24}
m & \in \{0\}\cup\{\eps_1+\eps_2\} \cup \{k\eps_1\mid 1\leq k\leq q\} \cup \{k\eps_2\mid 1\leq k\leq p\} \\
& \quad \cup \{\eps_1+\eps_2-k\eps_1\mid 1\leq k\leq q\} \cup \{\eps_1+\eps_2-k\eps_2\mid 1\leq k\leq p\} .
\ee
For those masses, the specific term considered here is in fact not singular (has a removable singularity) at $b^2=p/q$.
There exist many other terms $Z_{\vec{Y}}$ of course, and the precise question to answer for a given fine-tuned mass is whether this mass avoids poles at $b^2=p/q$ for all tuples~$\vec{Y}$.

\paragraph{An example and the resulting conjecture.}

A full answer is beyond the scope of this paper, but let us discuss the case $b^2=2/5$ as an example, and write $\eps_1=2\veps$, $\eps_2=5\veps$.  Then the set~\eqref{mset-24} reads
\be
m/\veps \in \{-3,-1,0,1,2,3,4,5,6,7,8,10\} .
\ee
Consider now the partition $\lambda=(2,2,2,1,1,1)$.  It still has a singular factor, but now the numerators vanish for a different (smaller) set of masses, due to a change in the various hooks:
\be
m/\veps \in \{0,1,2,3,4,5,6,7\} .
\ee
This example (among others) suggests that the relevant set of masses is the intersection of~\eqref{mset-24} and of the segment $[0,1+b^2]\eps_2$.  We state this as the following conjecture.

\begin{conjecture}\label{conj:2star-rational}
  The value $b^2=p/q$ is a removable singularity (or smooth point) of all terms $Z_{\vec{Y}}$ if $m$ or $\eps_1+\eps_2-m$ is in $\{k\eps_1\mid 0\leq k\leq \lceil q/p\rceil\}\cup\{k\eps_2\mid 0\leq k\leq \lceil p/q\rceil\}$.
  For other values of~$m$, $b^2=p/q$ is a pole of at least one term~$Z_{\vec{Y}}$.
\end{conjecture}

The necessity of these conditions on~$m$ might be proven by choosing $\lambda$ to be an approximation of a triangular shape with slope $p/q$, as this eliminates hooks with large disparities of the arm and leg lengths, thus prevents large/small masses from cancelling the pole.
The sufficiency (namely that poles disappear for such masses) might be proven by showing that for any partition, for each hook whose arm/leg lengths have the ratio $p/q$ (up to shifts by~$1$) responsible for a pole at $b^2=p/q$, there is a hook that provides the necessary zero.
Off-diagonal factors $Z_{IJ}$, $1\leq I\neq J\leq N$ play no role in this discussion under our genericity assumption $a_I-a_J\not\in\eps_1\ZZ+\eps_2\ZZ$.

Incidentally, for $m=0$ and $m=\eps_1+\eps_2$ each instanton configuration $\vec{Y}$ contributes exactly~$1$ because each $Z_{II}=1$ and $Z_{IJ}Z_{JI}=1$.
For more general masses suggested by Conjecture~\ref{conj:2star-rational}, since the presence of zeros for particular masses does not rely on a fine-tuning of~$b^2$, it should be interesting to consider whether the instanton partition function simplifies without assumption on~$b^2$.

Despite appearances, the AGT dictionary \emph{does not} map these conformal blocks to minimal model ones, which correspond instead to \emph{negative} rational~$b^2$.

\bibliographystyle{plain}
\bibliography{Radius2star}

@article{1404.7075,
    author = "Bershtein, M. and Foda, O.",
    title = "{AGT, Burge pairs and minimal models}",
    eprint = "1404.7075",
    archivePrefix = "arXiv",
    primaryClass = "hep-th",
    doi = "10.1007/JHEP06(2014)177",
    journal = "JHEP",
    volume = "06",
    pages = "177",
    year = "2014"
}

@article{1404.7094,
    author = "Alkalaev, K. B. and Belavin, V. A.",
    title = "{Conformal blocks of $W_N$ minimal models and AGT correspondence}",
    eprint = "1404.7094",
    archivePrefix = "arXiv",
    primaryClass = "hep-th",
    reportNumber = "FIAN-TD-2014-08",
    doi = "10.1007/JHEP07(2014)024",
    journal = "JHEP",
    volume = "07",
    pages = "024",
    year = "2014"
}

@article{1507.03540,
    author = "Belavin, Vladimir and Foda, Omar and Santachiara, Raoul",
    title = "{AGT, N-Burge partitions and $ {\mathcal{W}}_N $ minimal models}",
    eprint = "1507.03540",
    archivePrefix = "arXiv",
    primaryClass = "hep-th",
    doi = "10.1007/JHEP10(2015)073",
    journal = "JHEP",
    volume = "10",
    pages = "073",
    year = "2015"
}

@article{1909.10784,
    author = "Ribault, Sylvain",
    title = "{The non-rational limit of D-series minimal models}",
    eprint = "1909.10784",
    archivePrefix = "arXiv",
    primaryClass = "hep-th",
    doi = "10.21468/SciPostPhysCore.3.1.002",
    journal = "SciPost Phys. Core",
    volume = "3",
    pages = "002",
    year = "2020"
}

@article{1711.08916,
    author = "Migliaccio, Santiago and Ribault, Sylvain",
    title = "{The analytic bootstrap equations of non-diagonal two-dimensional CFT}",
    eprint = "1711.08916",
    archivePrefix = "arXiv",
    primaryClass = "hep-th",
    doi = "10.1007/JHEP05(2018)169",
    journal = "JHEP",
    volume = "05",
    pages = "169",
    year = "2018"
}

@misc{1406.4290,
    author = "Ribault, Sylvain",
    title = "{Conformal field theory on the plane}",
    eprint = "1406.4290",
    archivePrefix = "arXiv",
    primaryClass = "hep-th",
    month = "6",
    year = "2014"
}

@misc{2512.03172,
    author = "Roussillon, Julien and Tsiares, Ioannis",
    title = "{On the Virasoro Crossing Kernels at Rational Central Charge}",
    eprint = "2512.03172",
    archivePrefix = "arXiv",
    primaryClass = "hep-th",
    month = "12",
    year = "2025"
}

@inbook{1412.7121,
author="Tachikawa, Yuji",
unused-editor="Teschner, J{\"o}rg",
title="A Review on Instanton Counting and W-Algebras",
bookTitle="New Dualities of Supersymmetric Gauge Theories",
year="2016",
publisher="Springer International Publishing",
address="Cham",
pages="79--120",
isbn="978-3-319-18769-3",
doi="10.1007/978-3-319-18769-3_4",
url="https://doi.org/10.1007/978-3-319-18769-3_4",
unused-crossref="Teschner:2016yzf"
}

@book{Teschner:2016yzf,
    editor = {Teschner, J{\"o}rg},
    title = "{New Dualities of Supersymmetric Gauge Theories}",
    doi = "10.1007/978-3-319-18769-3",
    isbn = "978-3-319-18768-6, 978-3-319-18769-3",
    publisher = "Springer",
    address = "Cham, Switzerland",
    series = "Mathematical Physics Studies",
    year = "2016"
}

@misc{1504.01238,
      title={On convergence of basic hypergeometric series}, 
      author={Toshio Oshima},
      year={2015},
      eprint={1504.01238},
      archivePrefix={arXiv},
      primaryClass={math.CA},
      url={https://arxiv.org/abs/1504.01238}, 
}

@article{2205.00253,
  title={On the greatest common divisor of integer parts of polynomials},
  author={Banks, William and Shparlinski, Igor E},
  journal={Michigan Mathematical Journal},
  volume={75},
  number={3},
  pages={451--469},
  year={2025},
  publisher={University of Michigan, Department of Mathematics}
}

@article{hep-th/0206161,
    author = "Nekrasov, Nikita A.",
    title = "{Seiberg-Witten prepotential from instanton counting}",
    eprint = "hep-th/0206161",
    archivePrefix = "arXiv",
    reportNumber = "ITEP-TH-22-02, IHES-P-04-22",
    doi = "10.4310/ATMP.2003.v7.n5.a4",
    journal = "Adv. Theor. Math. Phys.",
    volume = "7",
    number = "5",
    pages = "831--864",
    year = "2003"
}

@article{hep-th/0701156,
    author = "Iqbal, Amer and Kozcaz, Can and Vafa, Cumrun",
    title = "{The Refined topological vertex}",
    eprint = "hep-th/0701156",
    archivePrefix = "arXiv",
    doi = "10.1088/1126-6708/2009/10/069",
    journal = "JHEP",
    volume = "10",
    pages = "069",
    year = "2009"
}

@article{0906.3219,
    author = "Alday, Luis F. and Gaiotto, Davide and Tachikawa, Yuji",
    title = "{Liouville Correlation Functions from Four-dimensional Gauge Theories}",
    eprint = "0906.3219",
    archivePrefix = "arXiv",
    primaryClass = "hep-th",
    doi = "10.1007/s11005-010-0369-5",
    journal = "Lett. Math. Phys.",
    volume = "91",
    pages = "167--197",
    year = "2010"
}

@inproceedings{0908.4052,
    author = "Nekrasov, Nikita A. and Shatashvili, Samson L.",
    title = "{Quantization of Integrable Systems and Four Dimensional Gauge Theories}",
    booktitle = "{16th International Congress on Mathematical Physics}",
    eprint = "0908.4052",
    archivePrefix = "arXiv",
    primaryClass = "hep-th",
    reportNumber = "TCD-MATH-09-19, HMI-09-09, IHES-P-09-38",
    doi = "10.1142/9789814304634_0015",
    pages = "265--289",
    year = "2010"
}

@article{1207.0787,
    author = "Gamayun, O. and Iorgov, N. and Lisovyy, O.",
    title = "{Conformal field theory of Painlev{\'e} VI}",
    eprint = "1207.0787",
    archivePrefix = "arXiv",
    primaryClass = "hep-th",
    doi = "10.1007/JHEP10(2012)038",
    journal = "JHEP",
    volume = "10",
    pages = "038",
    year = "2012",
    note = "[Erratum: JHEP 10, 183 (2012)]"
}

@article{1404.5188,
    author = "Novaes, F{\'a}bio and Carneiro da Cunha, Bruno",
    title = "{Isomonodromy, Painlev{\'e} transcendents and scattering off of black holes}",
    eprint = "1404.5188",
    archivePrefix = "arXiv",
    primaryClass = "hep-th",
    doi = "10.1007/JHEP07(2014)132",
    journal = "JHEP",
    volume = "07",
    pages = "132",
    year = "2014"
}

@article{2006.06111,
    author = "Aminov, Gleb and Grassi, Alba and Hatsuda, Yasuyuki",
    title = "{Black Hole Quasinormal Modes and Seiberg{\textendash}Witten Theory}",
    eprint = "2006.06111",
    archivePrefix = "arXiv",
    primaryClass = "hep-th",
    reportNumber = "RUP-20-18",
    doi = "10.1007/s00023-021-01137-x",
    journal = "Annales Henri Poincare",
    volume = "23",
    number = "6",
    pages = "1951--1977",
    year = "2022"
}

@article{math/0609841,
    author = "Martens, Johan",
    title = "{Equivariant volumes of non-compact quotients and instanton counting}",
    eprint = "math/0609841",
    archivePrefix = "arXiv",
    doi = "10.1007/s00220-008-0501-x",
    journal = "Commun. Math. Phys.",
    volume = "281",
    pages = "827--857",
    year = "2008"
}

@article{1403.1235,
    author = "Its, Alexander and Lisovyy, Oleg and Tykhyy, Yuriy",
    title = "{Connection Problem for the Sine-Gordon/Painlev{\'e} III Tau Function and Irregular Conformal Blocks}",
    eprint = "1403.1235",
    archivePrefix = "arXiv",
    primaryClass = "math-ph",
    doi = "10.1093/imrn/rnu209",
    journal = "Int. Math. Res. Not.",
    volume = "2015",
    number = "18",
    pages = "8903--8924",
    year = "2015"
}

@article{1608.02566,
    author = "Bershtein, M. A. and Shchechkin, A. I.",
    title = "{q-deformed Painlev{\'e} $\tau$ function and q-deformed conformal blocks}",
    eprint = "1608.02566",
    archivePrefix = "arXiv",
    primaryClass = "math-ph",
    doi = "10.1088/1751-8121/aa5572",
    journal = "J. Phys. A",
    volume = "50",
    number = "8",
    pages = "085202",
    year = "2017"
}

@article{2212.06741,
    author = "Arnaudo, Paolo and Bonelli, Giulio and Tanzini, Alessandro",
    title = "{On the Convergence of Nekrasov Functions}",
    eprint = "2212.06741",
    archivePrefix = "arXiv",
    primaryClass = "hep-th",
    doi = "10.1007/s00023-023-01349-3",
    journal = "Annales Henri Poincare",
    volume = "25",
    number = "4",
    pages = "2389--2425",
    year = "2024"
}

@article{1709.05232,
    author = {Felder, Giovanni and M{\"u}ller-Lennert, Martin},
    title = "{Analyticity of Nekrasov Partition Functions}",
    eprint = "1709.05232",
    archivePrefix = "arXiv",
    primaryClass = "math-ph",
    doi = "10.1007/s00220-018-3270-1",
    journal = "Commun. Math. Phys.",
    volume = "364",
    number = "2",
    pages = "683--718",
    year = "2018"
}

@article{2204.04409,
    author = "Huang, Yi-Zhi",
    title = "{Convergence in Conformal Field Theory}",
    eprint = "2204.04409",
    archivePrefix = "arXiv",
    primaryClass = "math.QA",
    doi = "10.1007/s11401-022-0379-5",
    journal = "Chin. Ann. Math. B",
    volume = "43",
    number = "6",
    pages = "1101--1124",
    year = "2022"
}

@article{2003.03802,
    author = "Ghosal, Promit and Remy, Guillaume and Sun, Xin and Sun, Yi",
    title = "{Probabilistic conformal blocks for Liouville CFT on the torus}",
    eprint = "2003.03802",
    archivePrefix = "arXiv",
    primaryClass = "math.PR",
    doi = "10.1215/00127094-2023-0031",
    journal = "Duke Math. J.",
    volume = "173",
    number = "6",
    pages = "1085--1175",
    year = "2024"
}

@article{2405.09325,
    author = "Roussillon, Julien",
    title = "{On the Virasoro fusion and modular kernels at any irrational central charge}",
    eprint = "2405.09325",
    archivePrefix = "arXiv",
    primaryClass = "hep-th",
    doi = "10.21468/SciPostPhys.17.5.138",
    journal = "SciPost Phys.",
    volume = "17",
    number = "5",
    pages = "138",
    year = "2024"
}

@article{2006.14025,
    author = "Le Floch, Bruno",
    title = "{A slow review of the AGT correspondence}",
    eprint = "2006.14025",
    archivePrefix = "arXiv",
    primaryClass = "hep-th",
    doi = "10.1088/1751-8121/ac5945",
    journal = "J. Phys. A",
    volume = "55",
    number = "35",
    pages = "353002",
    year = "2022"
}

@article{1504.01925,
    author = "Foda, Omar and Wu, Jian-Feng",
    title = "{From topological strings to minimal models}",
    eprint = "1504.01925",
    archivePrefix = "arXiv",
    primaryClass = "hep-th",
    doi = "10.1007/JHEP07(2015)136",
    journal = "JHEP",
    volume = "07",
    pages = "136",
    year = "2015"
}

@article{1509.01000,
    author = "Fukuda, Masayuki and Nakamura, Satoshi and Matsuo, Yutaka and Zhu, Rui-Dong",
    title = "{SH$^{c}$ realization of minimal model CFT: triality, poset and Burge condition}",
    eprint = "1509.01000",
    archivePrefix = "arXiv",
    primaryClass = "hep-th",
    reportNumber = "UT-15-32",
    doi = "10.1007/JHEP11(2015)168",
    journal = "JHEP",
    volume = "11",
    pages = "168",
    year = "2015"
}

@article{2011.06292,
    author = "Del Monte, Fabrizio and Desiraju, Harini and Gavrylenko, Pavlo",
    title = "{Isomonodromic Tau Functions on a Torus as Fredholm Determinants, and Charged Partitions}",
    eprint = "2011.06292",
    archivePrefix = "arXiv",
    primaryClass = "math-ph",
    doi = "10.1007/s00220-022-04458-y",
    journal = "Commun. Math. Phys.",
    volume = "398",
    number = "3",
    pages = "1029--1084",
    year = "2023"
}

@article{1608.00958,
    author = "Gavrylenko, P. and Lisovyy, O.",
    title = "{Fredholm Determinant and Nekrasov Sum Representations of Isomonodromic Tau Functions}",
    eprint = "1608.00958",
    archivePrefix = "arXiv",
    primaryClass = "math-ph",
    doi = "10.1007/s00220-018-3224-7",
    journal = "Commun. Math. Phys.",
    volume = "363",
    pages = "1--58",
    year = "2018"
}

@article{1705.01869,
    author = "Gavrylenko, P. and Lisovyy, O.",
    editor = "Kashani-Poor, Amir-Kian and Minasian, Ruben and Nekrasov, Nikita and Pioline, Boris",
    title = "{Pure SU(2) gauge theory partition function and generalized Bessel kernel.}",
    eprint = "1705.01869",
    archivePrefix = "arXiv",
    primaryClass = "math-ph",
    doi = "10.1090/pspum/098/01727",
    journal = "Proc. Symp. Pure Math.",
    volume = "18",
    pages = "181--208",
    year = "2018"
}

@article{1712.08546,
    author = "Cafasso, M. and Gavrylenko, P. and Lisovyy, O.",
    title = "{Tau functions as Widom constants}",
    eprint = "1712.08546",
    archivePrefix = "arXiv",
    primaryClass = "math-ph",
    doi = "10.1007/s00220-018-3230-9",
    journal = "Commun. Math. Phys.",
    volume = "365",
    number = "2",
    pages = "741--772",
    year = "2019"
}

@article{2508.14030,
    author = "Del Monte, Fabrizio and Desiraju, Harini and Gavrylenko, Pavlo",
    title = "{Modular transformations of tau functions and conformal blocks on the torus}",
    eprint = "2508.14030",
    archivePrefix = "arXiv",
    primaryClass = "math-ph",
    month = "8",
    year = "2025"
}

\end{document}